\let\csname equation*\endcsname\relax
\let\csname endequation*\endcsname\relax 
\newcommand* {\ee}{\ensuremath{\mathrm{e}}}
\newtheorem{theorem}{Theorem}
\newtheorem{proposition}[theorem]{Proposition}
\newenvironment{proof}[1][Proof]{\begin{trivlist}
\item[\hskip \labelsep {\bfseries #1}]}{\end{trivlist}}
\newenvironment{example}[1][Example]{\begin{trivlist}
\item[\hskip \labelsep {\bfseries #1}]}{\end{trivlist}}
\newenvironment{remark}[1][Remark]{\begin{trivlist}
\item[\hskip \labelsep {\bfseries #1}]}{\end{trivlist}}
\newcommand{\qed}{\nobreak \ifvmode \relax \else
      \ifdim\lastskip<1.5em \hskip-\lastskip
      \hskip1.5em plus0em minus0.5em \fi \nobreak
      \vrule height0.75em width0.5em depth0.25em\fi}
\begin{document}
 
\title[]{Dynamical control of quantum systems in the context of mean ergodic theorems}
\author{J. Z. Bern\'ad}
\address{Institut f\"{u}r Angewandte Physik, Technische Universit\"{a}t Darmstadt, D-64289 Darmstadt, Germany}
\ead{Zsolt.Bernad@physik.tu-darmstadt.de}

\date{\today}

\begin{abstract}
Equidistant and non-equidistant single pulse "bang-bang" dynamical controls are investigated in the context of mean ergodic theorems. We 
show the requirements in which the limit of infinite pulse control for both the equidistant and the non-equidistant dynamical control converges to the same 
unitary evolution. It is demonstrated that the generator of this evolution can be obtained by projecting the generator of the free evolution 
onto the commutant of the unitary operator representing the pulse. Inequalities are derived to prove this statement and in the case of non-equidistant approach these 
inequalities are optimised as a function of the time intervals. 
\end{abstract}

\vspace{2pc}
\noindent{\it Keywords}: dynamical control of quantum systems, mean ergodic theorems, inequality optimisation 

\maketitle

\section{Introduction}
\label{I}

One of the basic requirements of quantum information processing is the reliability of the physical qubits \cite{DiVincenzo}. A possible method to deal
with this problem is the so-called dynamical decoupling which allows the suppression of unwanted environmental effects \cite{Viola1,Viola2,Zanardi,Vitali}. 
Its source of motivation lies in the spin-echo effect \cite{hahn} and the nuclear magnetic spectroscopy community have already developed various decoupling 
methods to eliminate the dephasing of the spins \cite{Carr,Meiboom,Haeberlen}. The formalism proposed by L. Viola and S. Lloyd in Ref. \cite{Viola1} is mainly
based on a "bang-bang" control where unitary pulses are applied instantaneously and equidistantly separated in time to the quantum system in order to cancel undesirable parts of 
the Hamiltonian evolution. In recent years dynamical decoupling has been scrutinized more closely both from the theoretical \cite{Viola3,Uhrig,Khodjatesh,Quiroz} 
and the experimental side \cite{Biercuk,Bluhm,Souza,Hayes,Piltz}. It has been shown that this method can also tailor the Hamiltonian evolution into a 
desired one \cite{Escher,Frydrych}. Therefore, it can be called dynamical control and not only dynamical decoupling.

The simplest problem can be formulated as the quest for the following limit
\begin{equation}
 \lim_{N \to \infty} \left(u \ee^{-i H t/N} \right)^N=\ee^{-i H_{\text{id}}t}
 \label{1}
\end{equation}
where $H$ is the Hamiltonian operator of the evolving system, $H_{\text{id}}$ is the desired Hamilton operator and $u$ is the instantaneously applied unitary operator. 
This question falls into the collection of semigroup product formulas and it has been also connected to the quantum Zeno effect \cite{Arendt,Facchi}. The Chernoff product formula 
\cite{Chernoff} can be applied provided that there exists an non-zero natural number $k$ such that $u^k$  is equal to the identity \cite{Hillier}. 
This approach results in the formula of averaging the Hamilton operator $H$ over the group $\{u,u^2,...,u^k\}$ \cite{Zanardi}.
Another possibility is to determine the generator of the unitary operator in \eqref{1} for a fixed $N$ and then to
study the limit $N \to \infty$ of the generator series with the help of the von Neumann's mean ergodic theorem \cite{Facchi}. The latter method assumes that
the unitary operator $u$ has only a non-degenerate point spectrum and shows that in the case of infinite pulses we obtain a unitary evolution which is governed by
the Hamilton operator $H$ projected onto the commutant of $u$. Both methods can be applied to unbounded Hamilton operators with well defined domain.

In the present paper we have three aims: first, to extend the discussion of the limit in Eq. \eqref{1} to unitary operators with arbitrary spectrum; second, to generalise
this equation towards the non-equidistant dynamical control case \cite{Uhrig}; third, to investigate to some extent the optimisation of the 
convergence in the latter. Our
work attempts to take the full advantage of the results in ergodic theory \cite{Krengel}. In the context of the dynamical control we will work 
with the Banach space of bounded 
linear operators on a Hilbert space. We make use of the ergodic theorems 
obtained on Banach spaces and investigate the Ces\`{a}ro mean $N^{-1} \sum^{N-1}_{i=0} T^ix$, where $x$ is an element of the Banach space and 
$T$ is a bounded linear operator. 
Furthermore we generalise the problem in  Eq. \eqref{1} such that the system's unitary evolution can also be replaced by an 
uniformly continuous one-parameter semigroup of operators. 
We prove the convergence of the generalised product formula with the help of mean ergodic theorems. We will show that weighted Ces\`{a}ro means are 
directly connected to 
non-equidistant dynamical control. The abstract mean ergodic theorems by W. F. Eberlein already cover both standard and 
weighted Ces\`{a}ro means in locally-convex linear topological spaces \cite{Eberlein} with the help of the Mazur-Bourgin theorem \cite{Bourgin}. 
The four equivalent statements of Eberlain's theorem are too abstract to be applied directly to the problem formulated in this work. For this reason 
it is more convenient to use the splitting theorems of K. Yosida \cite{Yosida} and L. W. Cohen \cite{Cohen} for a Banach space. These approaches apply also
to the closed linear subspace of a Banach space, where the limit of the Ces\`{a}ro means exist and which will be the case in our investigation. 
The quests in these theorems is to show that the sequence of Ces\`{a}ro means converges strongly if a 
subsequence converges weakly, and by thus they are contained in Eberlein's abstract theorem. However, their formulation is favourable for our task and 
furthermore Cohen's theorem defines the sufficient conditions of the weights in a weighted Ces\`{a}ro mean, which imply the strong convergence.

The paper is organized as follows. In section \ref{II} we connect the equidistant dynamical control with the mean ergodic theorem of K. Yosida. The main theorem clarifies the convergence
of Eq. \eqref{1} and shows that the generator of the evolution in the limit $N \to \infty$ is obtained by projecting the original generator onto the commutant of $u$. In section 
\ref{III} we use the same strategy as in section \ref{II} for non-equidistant dynamical control. The main theorem shows under which conditions a limit can be obtained and if it exits then 
it is the same as in the case of equidistant dynamical control. The upper bounds of the inequalities derived in the main theorem are functions of the weights related to the 
non-equidistant splitting of the time and we give an optimisation for these functions.

\section{Equidistant dynamical control}
\label{II}

The main goal of dynamical control is to take active control over the time evolution of a system and change it
to a desirable way. This usually means that the there is a Hamilton operator $H$ which governs the free evolution
and we would like to change it to $H_{\text{id}}$. The source of inspiration lies in the method of dynamical decoupling
where the aim is to decouple two interacting systems($A$ and $B$), i.e.,
\begin{eqnarray}
H=H_A \otimes I_B+ I_A \otimes H_B +H_{AB}, \quad
H_{\text{id}}=H_A \otimes I_B+ I_A \otimes H_B, \nonumber
\end{eqnarray}
with $H_A$($H_B$) being the Hamilton operator of system $A$($B$), $I_A$($I_B$) is the identity operator in system  $A$($B$) and
$H_{AB}$ is the Hamiltonian operator of the interaction. In order to achieve active control over the time evolution it is assumed that the available time $t$
of the evolution is divided into intervals of length $\tau =t/N$ and an instantaneous unitary pulse $u$ is carried out after each time interval $\tau$. 
Therefore the resulting time evolution after applying all $N$ pulses is governed by the unitary operator
\begin{equation}
 u_N(t)=u \ee^{-iHt/N} u \ee^{-iHt/N} \dots u \ee^{-iHt/N}.
 \label{Ueq}
\end{equation}
The question is that of determining $u$ and the value of $N$ such that $u_N(t)$ gets close to $\ee^{-iH_{\text{id}}t}$ in an appropriately chosen norm.
In order to answer this question we are going to take the following approach: first we determine $u$ in the limit $N \to \infty$; second we derive an $N$ dependent 
upper bound for the distance between $u_N(t)$ and $\ee^{-iH_{\text{id}}t}$.

Let ${\cal B}(\mathcal{H})$ be the set of all bounded linear operators on a Hilbert space $\mathcal{H}$.  ${\cal B}(\mathcal{H})$ is a Banach space with respect to the operator norm
\begin{equation}
 ||A||_{\text{op}}=\sup\{||Ax||: x \in \mathcal{H}, ||x|| \leqslant 1\}. \nonumber
\end{equation}
We consider that $H \in {\cal B}(\mathcal{H})$. Therefore, we can express $\ee^{-iHt/N}$ by its infinite series and substituting in Eq. \eqref{Ueq} we obtain
\begin{eqnarray}
 u_N(t)= u^N-it\left(\frac{1}{N}\sum^N_{k=1} u^k H (u^\dagger)^k\right)u^N
 - \frac{t^2}{2 N^2} \left(\sum^N_{k=1} u^k H^2 (u^\dagger)^k+ \dots \right) u^N, \nonumber
\end{eqnarray}
where $u^\dagger$ is the adjoint of $u$.  

At first sight it seems that it is demanding to deal with the above expansion, but in the following we demonstrate step by step that this formula 
is deeply connected to ergodic theorems and the convergence of every order can be evaluated. We notice that the second term is a Ces\`{a}ro mean:
\begin{equation}
\frac{1}{N}\sum^N_{k=1} u^k H (u^\dagger)^k=\frac{1}{N}\sum^N_{k=1} T^k(H),
\label{eqmean}
\end{equation}
where $T$ is a linear operator on the Banach space ${\cal B}(\mathcal{H})$. Let us start with the following simple statement.
\begin{proposition}
\label{prop1}
Let $u$ be a unitary operator on a Hilbert space $\mathcal{H}$ and $T:{\cal B}(\mathcal{H})\rightarrow {\cal B}(\mathcal{H})$. 
If $T(A)=uAu^\dagger$ for all $A \in {\cal B}(\mathcal{H})$ then $T$ is an isometry.
\end{proposition}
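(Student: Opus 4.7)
The plan is to unfold the definition of the operator norm and use the fact that unitary operators preserve the Hilbert space norm twice: once on the left factor $u$ and once on the right factor $u^\dagger$. Concretely, I want to show $\|T(A)\|_{\text{op}} = \|A\|_{\text{op}}$ for every $A \in \mathcal{B}(\mathcal{H})$, which is exactly the statement that $T$ is an isometry of the Banach space $(\mathcal{B}(\mathcal{H}),\|\cdot\|_{\text{op}})$.

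First, I would write out
\begin{equation*}
\|T(A)\|_{\text{op}} = \sup\{\|uAu^\dagger x\| : x\in\mathcal{H},\,\|x\|\leqslant 1\},
\end{equation*}
and use $\|uy\| = \|y\|$ for all $y\in\mathcal{H}$ (a consequence of $u^\dagger u = \id$) to drop the outer $u$, leaving $\sup\{\|A u^\dagger x\|:\|x\|\leqslant 1\}$. Then I would make the change of variables $y = u^\dagger x$; because $u^\dagger$ is also unitary, $\|y\|=\|x\|$, and the map $x\mapsto u^\dagger x$ is a bijection of the closed unit ball of $\mathcal{H}$ onto itself. Hence the supremum over $\{x:\|x\|\leqslant 1\}$ of $\|Au^\dagger x\|$ equals the supremum over $\{y:\|y\|\leqslant 1\}$ of $\|Ay\|$, which is $\|A\|_{\text{op}}$.

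There is essentially no obstacle here beyond being careful that both $u$ and $u^\dagger$ are norm-preserving on $\mathcal{H}$ and that the substitution is a bijection of the unit ball; the argument is a two-line computation. One could equivalently observe that $T$ has an inverse $T^{-1}(B) = u^\dagger B u$, and the same calculation shows $\|T^{-1}\|\leqslant 1$, so that $T$ is a bijective contraction with contractive inverse, hence an isometry. Either route is short enough that I would just present the direct supremum computation.
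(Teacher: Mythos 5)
Your proof is correct and follows essentially the same route as the paper: both unfold the definition of the operator norm and establish its unitary invariance, with the right factor $u^\dagger$ absorbed by the substitution $y=u^\dagger x$ over the closed unit ball exactly as in the paper's last display. The only difference is minor: you strip the left factor $u$ using pointwise norm preservation $\|uy\|=\|y\|$, whereas the paper does it via the identity $\|A\|_{\text{op}}^2=\|A^\dagger A\|_{\text{op}}$; your version is marginally more elementary but the argument is the same.
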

\begin{proof}
An elementary property of ${\cal B}(\mathcal{H})$ is that for all $A \in {\cal B}(\mathcal{H})$, we have
\begin{equation}
 ||A||^2_{\text{op}}=||A^\dagger A||_{\text{op}}. \nonumber
\end{equation}
On the other hand for any two unitary operators $u_1$, $u_2$ 
\begin{eqnarray}
 ||Au_2||^2_{\text{op}}=||u^\dagger_2 A^\dagger u^\dagger_1 u_1 A u_2||_{\text{op}}
=|| \left(u_1 A u_2 \right)^\dagger u_1 A u_2||_{\text{op}}=||u_1Au_2||^2_{\text{op}} \nonumber
\end{eqnarray}
and the equality
\begin{equation}
 ||Au_2||_{\text{op}}=\sup\limits_{||x|| \leqslant 1}||Au_2x||=\sup\limits_{||x'||  \leqslant 1}||Ax'||=||A||_{\text{op}} \nonumber
\end{equation}
shows that the operator norm is a unitarily invariant norm. It is immediate that $||T(A)||_{\text{op}}=||A||_{\text{op}}$ for all $A \in {\cal B}(\mathcal{H})$
and therefore $T$ is an isometry.
\end{proof}

\begin{remark}
In the case of Hilbert-Schmidt operators which form a Hilbert-space
\begin{equation}
{\cal B}_2(\mathcal{H})=\{X \in {\cal B}(\mathcal{H}): ||X||_2=\sqrt{\sum_{i \in I}||Xe_i||^2}<\infty \}, \nonumber
\end{equation}
where $(e_i)_{i \in I}$ is an orthonormal basis of $\mathcal{H}$ the linear operator $T$ is unitary. This can be shown by using the Hilbert-Schmidt inner product
\begin{eqnarray}
\langle A,T(B) \rangle&=&\mathrm{Tr}\{ A^\dagger u B u^\dagger\}=\mathrm{Tr}\{u^\dagger A^\dagger u B\}=\langle T^\dagger (A),B \rangle,\,\,\forall A,B \in {\cal B}_2(\mathcal{H}), \nonumber \\
\mathrm{Tr}\{A\}&=&\sum_{i \in I} \langle e_i, A e_i \rangle_{\mathcal{H}} \nonumber
\end{eqnarray}
and the inner product $\langle .\,, .\rangle_{\mathcal{H}}$ of $\mathcal{H}$ to define the adjoint map $T^\dagger(A)=u^\dagger A u$ which obeys
\begin{equation}
 T T^\dagger=T^\dagger T= \mathcal{I} \nonumber
\end{equation}
with $\mathcal{I}$ being the identity map on ${\cal B}_2(\mathcal{H})$.
\end{remark}

Let us consider the operator sequence
\begin{equation}
 T_N (X)=\frac{1}{N}\sum^N_{k=1} T^k(X)
 \label{eqdynseq}
\end{equation}
in ${\cal B}(\mathcal{H})$. It is immediate from Proposition \ref{prop1} that
\begin{equation}
||T_N (X)||_{\text{op}}\leqslant||X||_{\text{op}} \nonumber
\end{equation}
and $T$ maps the closed unit ball $\{X \in {\cal B}(\mathcal{H}): ||X||_{\text{op}} \leqslant 1 \}$ to itself. We recall form operator theory that the dual 
of the trace class operators
\begin{equation}
   {\cal B}_1(\mathcal{H}):=\{ X \in {\cal B}(\mathcal{H}): \mathrm{Tr}{\sqrt{ X^\dagger X}} <\infty\}. \nonumber
\end{equation}
is ${\cal B}(\mathcal{H})$. Therefore the ultraweak operator topology is just the weak$^*$ topology on ${\cal B}(\mathcal{H})$ and by the 
Banach-Alaoglu theorem the closed unit ball of ${\cal B}(\mathcal{H})$ is compact in the ultraweak operator topology. 
The ultraweak operator topology is weaker (coarser) than the weak Banach topology and ${\cal B}(\mathcal{H})$ in general is not 
a reflexive space. Therefore, the closed unit ball in ${\cal B}(\mathcal{H})$ in general is not weakly compact \cite{Reed, Murphy}. 

These arguments show that $T$ is a power bounded operator on ${\cal B}(\mathcal{H})$:
\begin{equation}
\sup_k\{||T^k(X)||_{\text{op}}: X \in {\cal B}(\mathcal{H}), ||X||_{\text{op}} \leqslant 1\}=1,  \nonumber 
\end{equation}
but is not weakly compact. We introduce the following set for a power bounded linear operator $T$ on a Banach space ${\cal B}$ 
\begin{equation}
 \Xi_{{\cal B}}=\{x \in {\cal B}: \lim \frac{1}{N}\sum^N_{k=1} T^kx \,\, \text{exists}\}, \nonumber
\end{equation}
which is a closed linear subspace of ${\cal B}$ \cite{Krengel}. We define also the linear subspace
\begin{equation}
F_{{\cal B}}=\{x \in {\cal B}: Tx=x\}. \nonumber
\end{equation}

The original version of K. Yosida's mean ergodic theorem assumes that the operator $T$ in the 
Ces\`{a}ro mean is weakly compact. This is not the case in our work as it has been shown above. 
Therefore, we shall use a slightly modified version of K. Yosida's mean ergodic theorem, which states \cite{Krengel,Yosida}:
\begin{theorem}[Yosida's ergodic theorem]
 \label{theorem1}
 Let $T$ be a power bounded linear operator on a Banach space ${\cal B}$. Then 
 \begin{equation}
  \Xi_{{\cal B}}=F_{{\cal B}} \oplus \overline{\{x-Tx: x \in {\cal B}\}}. \nonumber
 \end{equation}
The linear operator $Px=\lim \frac{1}{N}\sum^N_{k=1} T^kx$ for $x \in \Xi_{{\cal B}}$ is the projection of $\Xi_{{\cal B}}$ onto $F_{{\cal B}}$.
We have $P=P^2=TP=PT$ and for any $z\in {\cal B}$ the assertions
\begin{eqnarray}
 \text{1)}\,\,\, \lim \frac{1}{N}\sum^N_{k=1} T^kz=0, \nonumber \\
 \text{2)}\,\,\,  z \in \overline{\{x-Tx: x \in {\cal B}\}} \nonumber 
\end{eqnarray}
are equivalent.
\end{theorem}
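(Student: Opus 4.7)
The plan is to split the proof into three components: (i) the easy inclusions $F_{\mathcal{B}} \subseteq \Xi_{\mathcal{B}}$ and $\overline{\{x-Tx : x \in \mathcal{B}\}} \subseteq \Xi_{\mathcal{B}}$ with identification of the Cesàro limits; (ii) directness of the sum $\Xi_{\mathcal{B}} = F_{\mathcal{B}} \oplus \overline{\{x-Tx : x \in \mathcal{B}\}}$ together with the claim that $P$ maps onto $F_{\mathcal{B}}$; (iii) the projection identities $P = P^2 = TP = PT$ and the equivalence 1)$\Leftrightarrow$2). Throughout I write $A_N = \frac{1}{N}\sum_{k=1}^N T^k$ and note that power boundedness yields $\sup_N \|A_N\| < \infty$.

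For (i), if $Tx = x$ then $A_N x = x$ for all $N$, so $F_{\mathcal{B}} \subseteq \Xi_{\mathcal{B}}$ with $Px = x$. If $x = y - Ty$ then the sum telescopes to $A_N x = \frac{1}{N}(Ty - T^{N+1}y)$, which vanishes in norm by power boundedness; uniform boundedness of $\{A_N\}$ combined with the closedness of $\Xi_{\mathcal{B}}$ extends this to the whole closure $N_{\mathcal{B}} := \overline{\{x-Tx : x \in \mathcal{B}\}}$, with limit zero. Any element of $F_{\mathcal{B}} \cap N_{\mathcal{B}}$ must equal both its constant Cesàro limit and $0$, giving $F_{\mathcal{B}} \cap N_{\mathcal{B}} = \{0\}$.

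The substance of the theorem lies in (ii): given $x \in \Xi_{\mathcal{B}}$ with $Px := \lim A_N x$, one must produce the decomposition $x = Px + (x - Px)$ with $Px \in F_{\mathcal{B}}$ and $x - Px \in N_{\mathcal{B}}$. That $Px \in F_{\mathcal{B}}$ follows from the identity
\[
T A_N x - A_N x = \frac{1}{N}\bigl(T^{N+1}x - Tx\bigr),
\]
whose right-hand side tends to zero, so continuity of $T$ gives $T(Px) = Px$. The harder half is $x - Px \in N_{\mathcal{B}}$. Since $A_N(x - Px) = A_N x - Px \to 0$, it suffices to show that $A_N y \to 0$ implies $y \in N_{\mathcal{B}}$, which is the nontrivial direction of 1)$\Rightarrow$2) and the main analytic obstacle. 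The natural route is Hahn-Banach separation: if $y \notin N_{\mathcal{B}}$, there is a continuous functional $\varphi \in \mathcal{B}^*$ with $\varphi|_{N_{\mathcal{B}}} = 0$ and $\varphi(y) \neq 0$. The vanishing condition $\varphi(x - Tx) = 0$ for all $x \in \mathcal{B}$ is precisely $T^*\varphi = \varphi$, so $\varphi(A_N y) = \varphi(y) \neq 0$ for every $N$, contradicting $A_N y \to 0$.

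For (iii), the decomposition in (ii) presents $P$ as the linear projection of $\Xi_{\mathcal{B}}$ onto $F_{\mathcal{B}}$ along $N_{\mathcal{B}}$, hence $P^2 = P$. Because $Px \in F_{\mathcal{B}}$, we have $TPx = Px$, so $TP = P$. For $PT$, the same identity displayed above yields $A_N(Tx) = A_N x + \frac{1}{N}(T^{N+1}x - Tx) \to Px$, showing $Tx \in \Xi_{\mathcal{B}}$ and $PTx = Px$. The remaining direction 2)$\Rightarrow$1) is already embedded in the telescoping computation of (i), which completes the equivalence and finishes the proof.
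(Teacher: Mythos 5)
Your proof is correct and complete: the telescoping computation for $y-Ty$, the uniform-boundedness argument extending it to the closure, the identity $TA_Nx-A_Nx=\frac{1}{N}(T^{N+1}x-Tx)$ giving $Px\in F_{\cal B}$, and the Hahn--Banach separation via a functional with $T^*\varphi=\varphi$ for the implication $1)\Rightarrow 2)$ are exactly the standard ingredients. The paper itself offers no proof of this theorem --- it is quoted as a known splitting result with references to Yosida and Krengel --- and your argument reproduces the textbook proof found there, so there is nothing to fault and no divergence of method to report.
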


Applying this theorem to the sequence \eqref{eqdynseq} with $T(X)=u X u^\dagger$ being a power bounded linear operator we find that
\begin{eqnarray}
 &&\lim_{N \to \infty}\frac{1}{N}\sum^N_{k=1} u^k X (u^\dagger)^k=P(X),\, \forall X \in  \Xi_{{\cal B}(\mathcal{H})}, \nonumber \\
 &&\Xi_{{\cal B}(\mathcal{H})}=F_{{\cal B}(\mathcal{H})}\oplus \overline{\{X-uXu^\dagger: X \in {\cal B}(\mathcal{H})\}} \nonumber
\end{eqnarray}
where $P$ projects onto the linear subspace $F_{{\cal B}(\mathcal{H})}=\{X \in {\cal B}(\mathcal{H}): [u,X]=0\}$.  
If $X^\dagger=X \in \Xi_{{\cal B}(\mathcal{H})}$, then ${\cal P}(X)^\dagger={\cal P}(X)$.

Now, considering all these preparations we return to the operator series given in Eq. \eqref{Ueq} and we give the main result of this section.
\begin{theorem}
 \label{theorem2}
 Let $u$ be a unitary operator in the Hilbert space $\mathcal{H}$ and the set $\Xi_{{\cal B}(\mathcal{H})}$ is defined
 by the power bounded linear map $T(X)=u X u^\dagger$. Let $P$ be the projection 
 operator which maps $\Xi_{{\cal B}(\mathcal{H})}$ onto the linear subspace $F_{{\cal B}(\mathcal{H})}=\{X \in {\cal B}(\mathcal{H}): [u,X]=0\}$. 
 Then, for any $X \in \Xi_{{\cal B}(\mathcal{H})}$ 
 and $t \in \mathbb{C}$ with $|t|<\infty$
 \begin{equation}
  \lim_{N \to \infty} ||u \ee^{Xt/N} u \ee^{Xt/N} \dots u \ee^{Xt/N}-\ee^{P(X)t} u^N||_{\text{op}} = 0.
\label{limit1}
 \end{equation}
\end{theorem}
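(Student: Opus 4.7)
My approach is to push all the $u$-factors to the right, reducing the claim to the operator-norm convergence of an ordered exponential product to a single exponential, and then to turn Yosida's Ces\`{a}ro-type convergence (Theorem \ref{theorem1}) into norm convergence of the whole product via a telescoping identity combined with a block decomposition. The first step uses $u\,\ee^{A}=\ee^{uAu^\dagger}u=\ee^{T(A)}u$ iteratively to obtain
$$u\ee^{Xt/N}u\ee^{Xt/N}\cdots u\ee^{Xt/N}=\left(\prod_{k=1}^{N}\ee^{T^k(X)t/N}\right)u^N,$$
the product taken in increasing order of $k$. Since $u^N$ is unitary and therefore an isometry for $||\cdot||_{\text{op}}$, the claim reduces to showing $\lim_{N\to\infty}||\prod_{k=1}^{N}\ee^{T^k(X)t/N}-\ee^{P(X)t}||_{\text{op}}=0$.

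\textbf{Telescoping.} Writing $\ee^{P(X)t}=(\ee^{P(X)t/N})^N$ and telescoping term by term,
$$\prod_{k=1}^{N}\ee^{T^k(X)t/N}-\ee^{P(X)t}=\sum_{j=1}^{N}\left(\prod_{k<j}\ee^{T^k(X)t/N}\right)\left(\ee^{T^j(X)t/N}-\ee^{P(X)t/N}\right)\ee^{(N-j)P(X)t/N}.$$
I then apply the operator identity $\ee^{B}-\ee^{A}=\int_{0}^{1}\ee^{sB}(B-A)\ee^{(1-s)A}\,ds$ to rewrite each summand as $(t/N)\int_{0}^{1}L_j(s)(T^j(X)-P(X))R_j(s)\,ds$, where $L_j(s),R_j(s)$ are bounded in operator norm by $\ee^{||X||_{\text{op}}|t|}$ uniformly in $j,s,N$. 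A term-by-term triangle bound is insufficient: because $||T^j(X)-P(X)||_{\text{op}}$ can be as large as $2||X||_{\text{op}}$, summing $N$ contributions of size $(|t|/N)\cdot O(1)$ produces only an $O(1)$ estimate.

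\textbf{Block decomposition (the main obstacle).} The cancellation that has to be extracted lives in $Y:=X-P(X)$, which satisfies $P(Y)=0$, so Theorem \ref{theorem1} yields $\epsilon(M):=||M^{-1}\sum_{j=1}^{M}T^j(Y)||_{\text{op}}\to 0$ as $M\to\infty$. Because $T$ is an isometry (Proposition \ref{prop1}), every shifted partial average satisfies $||M^{-1}\sum_{j=a+1}^{a+M}T^j(Y)||_{\text{op}}=\epsilon(M)$ uniformly in $a$. I therefore partition $\{1,\dots,N\}$ into blocks of length $M=\lfloor\sqrt{N}\rfloor$. Within a block with left endpoint $a$, the operators $L_j(s)$ and $R_j(s)$ differ from the constants $L_{a+1}(s),R_{a+1}(s)$ by multiplicative factors of norm at most $\ee^{||X||_{\text{op}}|t|M/N}-1=O(M/N)$, so after the outer $|t|/N$ prefactor the cumulative replacement error is $O(M|t|/N)\to 0$. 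The surviving main contribution from each block is a bounded operator sandwiching $\sum_{j=a+1}^{a+M}(T^j(X)-P(X))=T^{a}\sum_{l=1}^{M}T^l(Y)$, whose norm is at most $M\epsilon(M)$; summing over the $N/M$ blocks and multiplying by $|t|/N$ gives $|t|\ee^{2||X||_{\text{op}}|t|}\epsilon(M)\to 0$. Together, the two bounds prove \eqref{limit1}.

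The hard step is the blocking: recovering ergodic cancellation inside a sum with $j$-dependent sandwiching operators that obstruct a direct use of $\frac{1}{N}\sum_{j}(T^j(X)-P(X))\to 0$. The decomposition at scale $\sqrt{N}$, combined with the $T$-isometry that makes partial Ces\`{a}ro averages of $Y$ small uniformly in their starting index, is what bridges Yosida's average-type statement and the operator-norm convergence demanded by the theorem.
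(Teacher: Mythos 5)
Your argument is correct, and it takes a genuinely different route from the paper. The paper first splits $X$ via Yosida's decomposition $\Xi_{{\cal B}(\mathcal{H})}=F_{{\cal B}(\mathcal{H})}\oplus\overline{\{Y-uYu^\dagger\}}$, proves the claim with an explicit $O(1/N)$ rate on the dense subspace of \emph{exact} coboundaries $X=Y-uYu^\dagger$ — the crucial point there being that the partial sums collapse, $\sum_{i=1}^{n}T^i(Y-TY)=TY-T^{n+1}Y$, so they stay $O(1)$ instead of $O(n)$ — and then extends to the closure by an $\epsilon/\epsilon'$ perturbation argument based on a stability inequality for the product under replacing $X_1$ by a nearby $X_2$. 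You instead treat an arbitrary $X\in\Xi_{{\cal B}(\mathcal{H})}$ in one pass: telescoping against $(\ee^{P(X)t/N})^N$, the Duhamel identity $\ee^{B}-\ee^{A}=\int_0^1\ee^{sB}(B-A)\ee^{(1-s)A}\,ds$ in place of the paper's Taylor-series inequalities, and a block decomposition at scale $M=\lfloor\sqrt N\rfloor$ that recovers the Ces\`{a}ro cancellation despite the $j$-dependent sandwiching factors. The two ingredients that make this work — the exact shift-invariance $\|M^{-1}\sum_{j=a+1}^{a+M}T^j(Y)\|_{\text{op}}=\epsilon(M)$ from Proposition \ref{prop1}, and $P(X-P(X))=0$ from Theorem \ref{theorem1} — are both available, and your error accounting ($O(M|t|/N)$ for freezing the sandwich within blocks, $|t|\ee^{2\|X\|_{\text{op}}|t|}\epsilon(M)$ for the main term, plus a negligible incomplete final block) is sound. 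What you gain is a shorter, unified proof with no density step; what you lose is the explicit $M/N$ convergence rate the paper obtains on the dense subspace, which the paper reuses (e.g.\ in the remark that $N\gg\|H\|_{\text{op}}t$ suffices, and in the optimisation discussion of section \ref{III}): your bound is governed by $\epsilon(\sqrt N)$, which has no universal rate.
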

\begin{proof}
 The equality 
\begin{equation}
 u \ee^{Xt/N}  \dots u \ee^{Xt/N}=\ee^{uXu^\dagger t/N} \dots  \ee^{u^N X (u^\dagger)^N t/N} u^N \nonumber
\end{equation}
combined with the unitarily invariant property of the operator norm results
\begin{eqnarray}
 &&||u \ee^{Xt/N} u \ee^{Xt/N} \dots u \ee^{Xt/N}-\ee^{P(X)t} u^N||_{\text{op}}= \nonumber \\
&&=||\ee^{uXu^\dagger t/N} \ee^{u^2 X (u^\dagger)^2 t/N} \dots  \ee^{u^N X (u^\dagger)^N t/N}-\ee^{P(X)t}||_{\text{op}}. \nonumber
\end{eqnarray}
An outline to the strategy of the proof is the following: in the first part we are going to prove the convergence for elements 
either in $F_{{\cal B}(\mathcal{H})}$ or $\{X-uXu^\dagger: X \in {\cal B}(\mathcal{H})\}$; in the second part we make use 
the results of the first part and prove the convergence for any $X \in \Xi_{{\cal B}(\mathcal{H})}$.  In both cases systematic approximants of the 
product formula
\begin{equation}
 \ee^{uXu^\dagger t/N} \ee^{u^2 X (u^\dagger)^2 t/N} \dots  \ee^{u^N X (u^\dagger)^N t/N} \nonumber
\end{equation}
are studied.

First let $X=Y-uYu^\dagger$, that is, $X \in \{Y-uYu^\dagger: Y \in {\cal B}(\mathcal{H})\}$ and we define the following operator:
\begin{equation}
 S_n=\ee^{\sum^n_{i=1}u^i X (u^\dagger)^i t/N} \prod^N_{j=n+1} \ee^{u^j X (u^\dagger)^j t/N},\, 1\leqslant n < N. \nonumber
\end{equation}
Then
\begin{eqnarray}
||\ee^{uXu^\dagger t/N} \ee^{u^2 X (u^\dagger)^2 t/N} \dots  \ee^{u^N X (u^\dagger)^N t/N}-\ee^{P(X)t}||_{\text{op}}=||S_1-\ee^{P(X)t}||_{\text{op}}\leqslant \label{start}\\
 \leqslant \sum^{N-2}_{n=1} ||S_n-S_{n+1}||_{\text{op}}+||S_{N-1}-\ee^{\sum^N_{i=1}u^i X (u^\dagger)^i t/N}||_{\text{op}}
+ ||\ee^{\sum^N_{i=1}u^i X (u^\dagger)^i t/N}-\ee^{P(X)t}||_{\text{op}}. \nonumber
\end{eqnarray}
It follows from the submultiplicative property of $||.||_{\text{op}}$  that
\begin{eqnarray}
 &&||S_n-S_{n+1}||_{\text{op}}\leqslant  \label{ineq1} \\
&&\leqslant \prod^N_{j=n+2} ||\ee^{u^j X (u^\dagger)^j t/N}||_{\text{op}}||\ee^{\sum^n_{i=1}u^i X (u^\dagger)^i t/N}\ee^{u^{n+1} X (u^\dagger)^{n+1} t/N}-
\ee^{\sum^{n+1}_{i=1}u^i X (u^\dagger)^i t/N}||_{\text{op}} \nonumber
\end{eqnarray}
for $N-1>n\geqslant1$. The exponential of a bounded operator $\ee^A$ is defined 
through its Taylor series
\begin{equation}
\ee^A=1+A+\frac{A^2}{2!}+\dots \nonumber 
\end{equation}
and therefore for all $A,B \in {\cal B}(\mathcal{H})$ we get
\begin{eqnarray}
 ||\ee^{A} \ee^{B}-\ee^{A+B}||_{\text{op}}\leqslant \sum^\infty_{i=2} \frac{2}{i!} \left[\sum^{i-1}_{j=1} 
 \left( {i \choose j}-1 \right) ||A||^j_{\text{op}} ||B||^{i-j}_{\text{op}}  \right]. \label{ineq2}
\end{eqnarray}
This inequality can be combined with \eqref{ineq1} by choosing 
\begin{eqnarray}
 A&=&\frac{1}{N}\sum^n_{i=1}u^i X (u^\dagger)^i\, t, \nonumber \\
 B&=&\frac{1}{N}u^{n+1} X (u^\dagger)^{n+1}\, t. \nonumber
\end{eqnarray}
It follows from $X=Y-uYu^\dagger$ and the unitarily invariant property of $||.||_{\text{op}}$  that
\begin{eqnarray}
\label{ineqYeq}
||\frac{1}{N}\sum^n_{i=1}u^i X (u^\dagger)^i ||_{\text{op}}&=&||\frac{1}{N}\sum^n_{i=1}u^i (Y-uYu^\dagger) (u^\dagger)^i ||_{\text{op}} \nonumber \\
&=&||\frac{uYu^\dagger-u^{n+1}Y(u^\dagger)^{n+1}}{N}||_{\text{op}}\leqslant \frac{2||Y||_{\text{op}}}{N},
\end{eqnarray}
and
\begin{eqnarray}
 &&||\sum^n_{i=1}u^i X (u^\dagger)^i \,t||_{\text{op}}\leqslant 2||Y||_{\text{op}}\,|t|,\quad 
 || u^{n+1} X (u^\dagger)^{n+1}\, t||_{\text{op}}\leqslant 2||Y||_{\text{op}}\, |t|, \nonumber \\
 &&||\ee^{u^j X (u^\dagger)^j t/N}||_{\text{op}}\leqslant \sum^\infty_{n=0} \frac{||u^j X (u^\dagger)^j t/N||^n_{\text{op}}}{n!} \leqslant \sum^\infty_{n=0} 
 \frac{2^n |t|^n ||Y/N||^n_{\text{op}}}{n!}=\ee^{\frac{2|t|}{N}||Y||_{\text{op}}}. \nonumber
\end{eqnarray}
Finally, the inequality in \eqref{ineq1} yields
\begin{eqnarray}
 ||S_n-S_{n+1}||_{\text{op}}&\leqslant& \ee^{2|t|\frac{N-n-1}{N} ||Y||_{\text{op}}} \sum^\infty_{i=2} \frac{2^{i+1}|t|^i}{i!N^i} ||Y||^i_{\text{op}} \sum^{i-1}_{j=1} 
 \left( {i \choose j}-1 \right) \nonumber \\
 &\leqslant& \ee^{2|t| ||Y||_{\text{op}}} \sum^\infty_{i=2} \frac{2^{i+1}|t|^i}{i!N^i} ||Y||^i_{\text{op}} \sum^{i-1}_{j=1} 
 \left( {i \choose j}-1 \right) \nonumber
\end{eqnarray}
and therefore
\begin{eqnarray}
\sum^{N-2}_{n=1} ||S_n-S_{n+1}||_{\text{op}} \leqslant (N-2) \ee^{2|t|\,||Y||_{\text{op}}}\left(\frac{4 |t|^2}{N^2}||Y||^2_{\text{op}}+
\mathcal{O}\left(\frac{1}{N^3}\right) \right), \label{1use}
\end{eqnarray}
where $\mathcal{O}$ is the big $O$ notation for asymptotic behaviour. We make use again of inequality \eqref{ineq2} by choosing
\begin{eqnarray}
 A&=&\frac{1}{N} \sum^{N-1}_{i=1}u^i X (u^\dagger)^i \,t, \nonumber \\
 B&=&\frac{1}{N} u^{N} X (u^\dagger)^{N} \,t, \nonumber
\end{eqnarray}
which results:
\begin{eqnarray}
 ||S_{N-1}-\ee^{\sum^N_{i=1}u^i X (u^\dagger)^i t/N}||_{\text{op}} \leqslant \sum^\infty_{i=2} \frac{2^{i+1}|t|^i}{i!N^i} ||Y||^i_{\text{op}} \sum^{i-1}_{j=1} 
 \left( {i \choose j}-1 \right). \label{2use}
\end{eqnarray}
Since $X \in \{Y-uYu^\dagger: Y \in {\cal B}(\mathcal{H})\}$, we obtain $P(X)=0$ and 
\begin{eqnarray}
 &&||\ee^{\sum^N_{i=1}u^i X (u^\dagger)^i t/N}-\ee^{P(X)t}||_{\text{op}} \leqslant \sum^\infty_{n=1} \frac{||\sum^N_{i=1}u^i X (u^\dagger)^i t/N||^n_{\text{op}}}{n!}= \nonumber \\
 &&=\sum^\infty_{n=1} \frac{||(u Y u^\dagger-u^{N+1} Y (u^\dagger)^{N+1}) t/N||^n_{\text{op}}}{n!} \leqslant 
 \sum^\infty_{n=1} \frac{\left(2 ||Y||_{\text{op}}\, |t|\right)^n}{n!N^n}. \label{3use}
\end{eqnarray}
Now, we are able to derive an upper bound for \eqref{start} by adding the three inequalities in \eqref{1use}, \eqref{2use} and \eqref{3use} 
and introducing the finite constant 
\begin{equation}
 M= 4 |t|^2 \ee^{2|t|\,||Y||_{\text{op}}}\,||Y||^2_{\text{op}}+ 2 ||Y||_{\text{op}}\,|t|, \nonumber
\end{equation}
which yields that for any $X \in \{Y-uYu^\dagger: Y \in {\cal B}(\mathcal{H})\}$ 
\begin{equation}
||\ee^{uXu^\dagger t/N} \ee^{u^2 X (u^\dagger)^2 t/N} \dots  \ee^{u^N X (u^\dagger)^N t/N}-\ee^{P(X)t}||_{\text{op}} 
\leqslant \frac{M}{N}+\mathcal{O}\left(\frac{1}{N^2}\right). \label{4use}
\end{equation}
The right-hand side of this inequality goes to zero as $N \to \infty$. 
Now, suppose $X \in F_{{\cal B}(\mathcal{H})}$ then the convergence of \eqref{limit1} is trivial:
\begin{eqnarray}
 ||\ee^{uXu^\dagger t/N} \ee^{u^2 X (u^\dagger)^2 t/N} \dots  \ee^{u^N X (u^\dagger)^N t/N}-\ee^{P(X)t}||_{\text{op}}=
 ||\ee^{X\,t} -\ee^{X\,t}||_{\text{op}}=0. \nonumber
\end{eqnarray}

The last part of the proof runs as follows. Theorem \ref{theorem1} states that any $X \in \Xi_{{\cal B}(\mathcal{H})}$ can be 
written as
\begin{equation}
 X=X_0 + W \nonumber
\end{equation}
where $X_0 \in F_{{\cal B}(\mathcal{H})}$ and $W \in \overline{\{X-uXu^\dagger: X \in {\cal B}(\mathcal{H})\}}$. 
Let $A,B,C \in {\cal B}(\mathcal{H})$ and
\begin{eqnarray}
&&||\ee^{B} \ee^{A+C}-\ee^{A} \ee^{B+C} ||_{\text{op}} \leqslant \sum^\infty_{i=2} \frac{2}{i!} \sum^{i-1}_{j=1} \left\{ 
 \left[ {i \choose j}-1 \right] ||A||^j_{\text{op}} ||C||^{i-j}_{\text{op}}+  \right. \nonumber \\
 && \left. + \left[ {i \choose j}-1 \right] ||B||^j_{\text{op}} ||C||^{i-j}_{\text{op}}+
 {i \choose j} ||A||^j_{\text{op}} ||B||^{i-j}_{\text{op}} \right\} \label{enuse},
\end{eqnarray}
which is derived by using the Taylor series of the exponentials. First, we consider an element 
$X \in \Xi_{{\cal B}(\mathcal{H})}$ such that $X=X_0+Y-uYu^\dagger$ and define the following operator:
\begin{equation}
 S_n=\ee^{n X_0 t/N}  \ee^{\sum^n_{i=1} u^i (Y-uYu^\dagger) (u^\dagger)^i t/N} \prod^N_{j=n+1} \ee^{u^j X (u^\dagger)^j t/N},\, 1\leqslant n < N. 
 \nonumber
\end{equation}
Then
\begin{eqnarray}
&&||\ee^{uXu^\dagger t/N} \ee^{u^2 X (u^\dagger)^2 t/N} \dots  \ee^{u^N X (u^\dagger)^N t/N}-\ee^{P(X)t}||_{\text{op}}\leqslant \label{bvc} \\
&&\leqslant ||\ee^{uXu^\dagger t/N} \ee^{u^2 X (u^\dagger)^2 t/N} \dots  \ee^{u^N X (u^\dagger)^N t/N}-S_1||_{\text{op}}+
 \sum^{N-2}_{n=1} ||S_n-S_{n+1}||_{\text{op}} \nonumber \\
&& + ||S_{N-1}-\ee^{X_0 t}  \ee^{\sum^N_{i=1} u^i (Y-uYu^\dagger) (u^\dagger)^i t/N}||_{\text{op}} +
||\ee^{X_0 t}  \ee^{\sum^N_{i=1} u^i (Y-uYu^\dagger) (u^\dagger)^i t/N}-\ee^{P(X)t}||_{\text{op}}. \nonumber
\end{eqnarray}
First, we have 
\begin{eqnarray}
 &&||\ee^{uXu^\dagger t/N} \ee^{u^2 X (u^\dagger)^2 t/N} \dots  \ee^{u^N X (u^\dagger)^N t/N}-S_1||_{\text{op}} \leqslant 
 \nonumber \\
 && \leqslant ||\ee^{uXu^\dagger t/N}-\ee^{X_0 t/N} \ee^{u (Y-uYu^\dagger) u^\dagger t/N}||_{\text{op}} \ee^{||X||_{\text{op}}\,|t|}
 \leqslant \nonumber \\
 && \leqslant \frac{2 |t|^2}{N^2}||X_0||_{\text{op}}\,||Y||_{\text{op}} \ee^{||X||_{\text{op}}\,|t|}+
\mathcal{O}\left(\frac{1}{N^3}\right) , \label{k1}
\end{eqnarray}
where we used inequality \eqref{ineq2} with $A=X_0 t/N$ and $B=u (Y-uYu^\dagger) u^\dagger t/ N$. In the next step we choose
\begin{eqnarray}
 A&=&\frac{X_0 t}{N}, \, \, B=\frac{1}{N} \sum^n_{i=1} u^i (Y-uYu^\dagger) (u^\dagger)^i t, \nonumber \\
 C&=& \frac{u^{n+1} (Y-uYu^\dagger) (u^\dagger)^{n+1} t}{N}, \nonumber 
\end{eqnarray}
where $1 \leqslant n < N-1$ and apply the result in \eqref{enuse}. Due to \eqref{ineqYeq} we have 
\begin{equation}
 ||A||_{\text{op}} \leqslant \frac{||X_0||_{\text{op}}\,|t|}{N},\quad ||B||_{\text{op}}, ||C||_{\text{op}} \leqslant
 \frac{2 ||Y||_{\text{op}}\,|t|}{N}. \nonumber
\end{equation}
Hence
\begin{eqnarray}
 &&||S_n-S_{n+1}||_{\text{op}}\leqslant \ee^{\frac{N-n-1}{N}||X||_{\text{op}}\,|t|} \times \nonumber \\
 \times &&||\ee^{n X_0 t/N}  \ee^{\sum^n_{i=1} u^i (Y-uYu^\dagger) (u^\dagger)^i t/N}\ee^{ u^{n+1} X (u^\dagger)^{n+1} t/N} -
 \ee^{(n+1) X_0 t/N}  \ee^{\sum^{n+1}_{i=1} u^i (Y-uYu^\dagger) (u^\dagger)^i t/N}||_{\text{op}}  \nonumber \\
\leqslant && ||\ee^{\sum^n_{i=1} u^i (Y-uYu^\dagger) (u^\dagger)^i t/N}\ee^{ u^{n+1} (X_0+Y-uYu^\dagger) (u^\dagger)^{n+1} t/N} -
 \ee^{X_0 t/N}  \ee^{\sum^{n+1}_{i=1} u^i (Y-uYu^\dagger) (u^\dagger)^i t/N}||_{\text{op}} \times \nonumber \\
\times &&\ee^{||X||_{\text{op}}\,|t|} \leqslant \ee^{||X||_{\text{op}}\,|t|} \frac{2|t|^2 }{N^2} ||Y||_{\text{op}}
\Big(2||Y||_{\text{op}}+ 3||X_0||_{\text{op}} \Big) + \mathcal{O}\left(\frac{1}{N^3}\right). \label{k2}
\end{eqnarray}
Similarly we obtain
\begin{equation}
 ||S_{N-1}-\ee^{X_0 t}  \ee^{\sum^N_{i=1} u^i (Y-uYu^\dagger) (u^\dagger)^i t/N}||_{\text{op}} \leqslant 
   \ee^{||X||_{\text{op}}\,|t|} \frac{2|t|^2 }{N^2} ||Y||_{\text{op}}
\Big(2||Y||_{\text{op}}+ 3||X_0||_{\text{op}} \Big) + \mathcal{O}\left(\frac{1}{N^3}\right)
 \label{k3}
\end{equation}
by choosing 
\begin{eqnarray}
 A&=&\frac{X_0 t}{N}, \, \, B=\frac{1}{N} \sum^{N-1}_{i=1} u^i (Y-uYu^\dagger) (u^\dagger)^i t, \nonumber \\
 C&=& \frac{u^{N} (Y-uYu^\dagger) (u^\dagger)^{N} t}{N}, \nonumber 
\end{eqnarray}
in \eqref{enuse}. For the last term  we get
\begin{eqnarray}
 &&||\ee^{X_0 t}  \ee^{\sum^N_{i=1} u^i (Y-uYu^\dagger) (u^\dagger)^i t/N}-\ee^{P(X)t}||_{\text{op}}=
 ||\ee^{X_0 t}  \ee^{\sum^N_{i=1} u^i (Y-uYu^\dagger) (u^\dagger)^i t/N}-\ee^{X_0 t}\ee^{P(Y-uYu^\dagger)t} ||_{\text{op}} \nonumber \\
 &&\leqslant \ee^{||X||_{\text{op}}\,|t|} || \ee^{\sum^N_{i=1} u^i (Y-uYu^\dagger) (u^\dagger)^i t/N}-\ee^{P(Y-uYu^\dagger)t} ||_{\text{op}}
 \leqslant \ee^{||X||_{\text{op}}\,|t|}  \frac{M}{N}+ \mathcal{O}\left(\frac{1}{N^2}\right)
 \label{k4}
\end{eqnarray}
where we used the result from \eqref{4use} and the fact that $P(Y-uYu^\dagger)=0$.

Substituting inequalities \eqref{k1}, \eqref{k2}, \eqref{k3}, and \eqref{k4} into \eqref{bvc}
results that for any element 
$X\in F_{{\cal B}(\mathcal{H})} \oplus \{Y-uYu^\dagger: Y \in {\cal B}(\mathcal{H})\}$
\begin{equation}
 ||\ee^{uXu^\dagger t/N} \ee^{u^2 X (u^\dagger)^2 t/N} \dots  \ee^{u^N X (u^\dagger)^N t/N}-\ee^{P(X)t}||_{\text{op}} \leqslant \frac{M'}{N} 
 + \mathcal{O}\left(\frac{1}{N^2}\right) \nonumber
\end{equation}
where we introduced the constant
\begin{equation}
 M'=\ee^{||X||_{\text{op}}\,|t|} \Big[M + 2|t|^2\, ||Y||_{\text{op}}
\Big(2||Y||_{\text{op}}+ 3||X_0||_{\text{op}} \Big)  \Big ] \nonumber
\end{equation}
and thus for any $X\in F_{{\cal B}(\mathcal{H})} \oplus \{Y-uYu^\dagger: Y \in {\cal B}(\mathcal{H})\}$
\begin{equation}
 ||\ee^{uXu^\dagger t/N} \ee^{u^2 X (u^\dagger)^2 t/N} \dots  \ee^{u^N X (u^\dagger)^N t/N}-\ee^{P(X)t}||_{\text{op}} \xrightarrow[N \to \infty]{} 
 0. \label{k5}
\end{equation}
We claim finally that for any $X\in F_{{\cal B}(\mathcal{H})} \oplus \overline{\{Y-uYu^\dagger: Y \in {\cal B}(\mathcal{H})\}}$ the theorem holds. 
Let us consider the following inequality for any $X_1,X_2 \in {\cal B}(\mathcal{H})$:
\begin{eqnarray}
&&||\ee^{X_1 t/N}-\ee^{X_2 t/N} ||_{\text{op}} \leqslant ||X_1-X_2 ||_{\text{op}}\,\frac{|t|}{N}+ 
||X^2_1-X^2_2 ||_{\text{op}} \frac{|t|^2}{2!N^2}+\dots \nonumber \\
&&\leqslant ||X_1-X_2 ||_{\text{op}} \left(\frac{|t|}{N}+\frac{||X_1||_{\text{op}}+||X_2||_{\text{op}}}{2!N^2}|t|^2+\dots \right)= 
\label{specuse2} \\
&&=||X_1-X_2 ||_{\text{op}} \frac{|t|}{N} \sum^\infty_{i=0} \frac{\left(||X_1||_{\text{op}}+||X_2||_{\text{op}}\right)^i}{(i+1)! N^i} |t|^i
\nonumber
\end{eqnarray}
where we used
\begin{eqnarray}
 &&||A^n-B^n ||_{\text{op}}=||A^n-A^{n-1}B+A^{n-1}B-A^{n-2}B^2+ \dots AB^{n-1}-B^n ||_{\text{op}} \nonumber \\
 &&\leqslant ||A-B ||_{\text{op}} \left( \sum^{n-1}_{i=0} ||A||^{n-1-i}_{\text{op}}||B||^{i}_{\text{op}}\right)\leqslant ||A-B ||_{\text{op}} 
 \Big(||A||_{\text{op}}+||B||_{\text{op}}\Big)^{n-1}. \nonumber
\end{eqnarray}
One more inequality is still required for this part of the proof. Let us introduce
\begin{eqnarray}
 P_0&=&\prod^{N}_{j=1} \ee^{u^j X_1 (u^\dagger)^j t/N} \nonumber \\
 P_i&=&\prod^{N-i}_{j=1} \ee^{u^j X_1 (u^\dagger)^j t/N}\prod^{N}_{j=N-i+1} \ee^{u^j X_2 (u^\dagger)^j t/N},\quad 0 < i <N. \nonumber
\end{eqnarray}
Then
\begin{eqnarray}
 &&||\ee^{uX_1u^\dagger t/N} \dots  \ee^{u^N X_1 (u^\dagger)^N t/N}-\ee^{u X_2 u^\dagger t/N} \dots  \ee^{u^N X_2 (u^\dagger)^N t/N}||_{\text{op}}
\nonumber \\
&& \leqslant \sum^{N-2}_{i=0}||P_i-P_{i+1}||_{\text{op}}+||P_{N-1}-\ee^{u X_2 u^\dagger t/N} \dots  \ee^{u^N X_2 (u^\dagger)^N t/N}||_{\text{op}}. \nonumber
\end{eqnarray}
First, we have
\begin{eqnarray}
||P_0-P_1||_{\text{op}} &\leqslant& \prod^{N-1}_{j=1} ||\ee^{u^j X_1 (u^\dagger)^j t/N}||_{\text{op}} || \ee^{u^N X_1 (u^\dagger)^N t/N}-
\ee^{u^N X_2 (u^\dagger)^N t/N}||_{\text{op}}
\nonumber \\
&\leqslant& \ee^{\frac{N-1}{N} ||X_1||_{\text{op}}\,|t|} || \ee^{X_1 t/N}-\ee^{X_2 t/N}||_{\text{op}}
\leqslant \ee^{||X_1||_{\text{op}}\,|t|} || \ee^{X_1 t/N}-\ee^{X_2 t/N}||_{\text{op}} \nonumber \\
&\leqslant& \ee^{ \left(||X_1||_{\text{op}}+ ||X_1-X_2 ||_{\text{op}} \right)\,|t|} \,
|| \ee^{X_1 t/N}-\ee^{X_2 t/N}||_{\text{op}}.\nonumber
\end{eqnarray}
We consider the following inequality
\begin{equation}
||X_2||_{\text{op}} \leqslant ||X_1||_{\text{op}} + ||X_1-X_2 ||_{\text{op}} \nonumber
\end{equation}
and then  for $N-1>i>0$
\begin{eqnarray}
&&||P_i-P_{i+1}||_{\text{op}} \leqslant \prod^{N-1-i}_{j=1} ||\ee^{u^j X_1 (u^\dagger)^j t/N}||_{\text{op}} 
|| \ee^{u^{N-i} X_1 (u^\dagger)^{N-i} t/N}-\ee^{u^{N-i} X_2 (u^\dagger)^{N-i} t/N}||_{\text{op}} \times
\nonumber \\
&&\times \prod^{N}_{j=N-i+1} ||\ee^{u^j X_2 (u^\dagger)^j t/N}||_{\text{op}} \leqslant 
\ee^{\frac{N-i-1}{N} ||X_1||_{\text{op}}\,|t|} || \ee^{X_1 t/N}-\ee^{X_2 t/N}||_{\text{op}}\, 
\ee^{\frac{i}{N} ||X_2||_{\text{op}}\,|t|} \nonumber \\
&& \leqslant \ee^{ \left(||X_1||_{\text{op}}+ ||X_1-X_2||_{\text{op}} \right)\,|t|} \,
|| \ee^{X_1 t/N}-\ee^{X_2 t/N}||_{\text{op}}.
\nonumber
\end{eqnarray}
We also have
\begin{eqnarray}
&&||P_{N-1}-\ee^{uX_2u^\dagger t/N} \dots  \ee^{u^N X_2 (u^\dagger)^N t/N}||_{\text{op}}\leqslant  
 || \ee^{X_1 t/N}-\ee^{X_2 t/N}||_{\text{op}}\, 
\ee^{\frac{N-1}{N} ||X_2||_{\text{op}}\,|t|} \nonumber \\
&& \leqslant \ee^{ \left(||X_1||_{\text{op}}+ ||X_1-X_2 ||_{\text{op}} \right)\,|t|} \,
|| \ee^{X_1 t/N}-\ee^{X_2 t/N}||_{\text{op}}.
\nonumber
\end{eqnarray}
Thus for all $N > 1$
\begin{eqnarray}
 &&||\ee^{uX_1u^\dagger t/N} \dots  \ee^{u^N X_1 (u^\dagger)^N t/N}-\ee^{uX_2u^\dagger t/N} \dots  \ee^{u^N X_2 (u^\dagger)^N t/N}||_{\text{op}}
\label{specuse} \\
&& \leqslant N \ee^{ \left(||X_1||_{\text{op}}+ ||X_1-X_2 ||_{\text{op}} \right)\,|t|} \,
|| \ee^{X_1 t/N}-\ee^{X_2 t/N}||_{\text{op}}. \nonumber
\end{eqnarray}
Let $X_1=X_0+W$ and $W$ be not of the form $X-uXu^\dagger$ and $ t \in \mathbb{C}$ with $|t|<\infty$. Then for any $\epsilon > 0$ we can take 
an arbitrary $\epsilon' > 0$ satisfying 
\begin{equation}
\ee^{ \left(||X_1||_{\text{op}}+ \epsilon' \right)\,|t|} |t| \epsilon' (1+ \epsilon')+ \epsilon' \leqslant \epsilon, \nonumber
\end{equation}
and for this $\epsilon'$ there exists $X_2 \in {\cal B}(\mathcal{H})$ and $Y \in {\cal B}(\mathcal{H})$ such that
$X_2=X_0 + Y-uYu^\dagger$ and
\begin{equation}
 ||X_1-X_2||_{\text{op}}< \epsilon'. \nonumber
\end{equation} 
Due to the convergence in \eqref{k5} we have that for $\epsilon'> 0$ we can take an $N'$ such that for every $N>N'$ 
\begin{equation}
||\ee^{uX_2u^\dagger t/N} \ee^{u^2 X_2 (u^\dagger)^2 t/N} \dots  \ee^{u^N X_2 (u^\dagger)^N t/N}-
\ee^{P(X_1)t}||_{\text{op}}< \epsilon' \label{nm}
\end{equation}
where we used the fact that $P(X_1)=P(X_2)=X_0$. There exists an $N''$ such that for every $N>N''$
\begin{equation}
\sum^\infty_{i=1} \frac{\left(||X_1||_{\text{op}}+||X_2||_{\text{op}} \right)^i}{(i+1)! N^i}|t|^i < \epsilon' \nonumber
\end{equation}
and together with \eqref{specuse2} we get
\begin{equation}
 N \ee^{ \left(||X_1||_{\text{op}}+ ||X_1-X_2 ||_{\text{op}} \right)\,|t|} \,
|| \ee^{X_1 t/N}-\ee^{X_2 t/N}||_{\text{op}} < \ee^{\left(||X_1||_{\text{op}}+ \epsilon' \right)\,|t|} |t| \epsilon' (1+ \epsilon').
\nonumber
\end{equation}
Thus, with the aid of \eqref{specuse} and \eqref{nm} we have that for all $N>\max\{N',N''\}$ 
\begin{eqnarray}
 &&||\ee^{uX_1u^\dagger t/N} \ee^{u^2 X_1 (u^\dagger)^2 t/N} \dots  \ee^{u^N X_1 (u^\dagger)^N t/N}-\ee^{P(X_1)t}||_{\text{op}}  \nonumber \\
 && \leqslant ||\ee^{uX_1u^\dagger t/N} \dots  \ee^{u^N X_1 (u^\dagger)^N t/N}-\ee^{uX_2u^\dagger t/N} \dots  \ee^{u^N X_2 (u^\dagger)^N t/N}||_{\text{op}}+ \nonumber \\
 && + ||\ee^{uX_2u^\dagger t/N} \ee^{u^2 X_2 (u^\dagger)^2 t/N} \dots  \ee^{u^N X_2 (u^\dagger)^N t/N}-
 \ee^{P(X_1)t}||_{\text{op}} \nonumber \\
  && <  \ee^{\left(||X_1||_{\text{op}}+ \epsilon' \right)\,|t|} |t| \epsilon' (1+ \epsilon')+\epsilon' \leqslant \epsilon. \nonumber
\end{eqnarray}
Therefore 
\begin{equation}
  \lim_{N \to \infty}||\ee^{uXu^\dagger t/N} \ee^{u^2 X (u^\dagger)^2 t/N} \dots  \ee^{u^N X (u^\dagger)^N t/N}-\ee^{P(X)t}||_{\text{op}} = 0 \nonumber
 \end{equation}
holds for all $X \in \Xi_{{\cal B}(\mathcal{H})}$, which yields the desired conclusion.
\end{proof}     

The above results show for the series in Eq. \eqref{Ueq} that
\begin{equation}
 \lim_{N \to \infty} ||u_N(t)-\ee^{-iH_{\text{id}}t}u^N||_{\text{op}}=0, \nonumber
\end{equation}
which holds for the Hamilton operator $H=H_{\text{id}}+H_{\text{err}} \in \Xi_{{\cal B}(\mathcal{H})}$ with
\begin{equation}
 [H_{\text{id}},u]=0,\,\,\, [H_{\text{err}},u]\neq0. \nonumber
\end{equation}
Theorem \ref{theorem2} also shows that for pulse numbers $N \gg ||H||_{\text{op}}t$ the unitary operator $u_N(t)$
approximates well the ideal evolution $\ee^{-iH_{\text{id}}t}$. This resembles the experimental findings, whereas
longer coherence times require more pulses.

\section{Non-equidistant dynamical control}
\label{III}

In the case of the equidistant dynamical control the time interval $[0,t]$ is split into $N$ equal time intervals with length
$t/N$. Now, the time interval is split again into $N$ time intervals with lengths $t_1,t_2,...,t_N$ where $t_i \geqslant 0$ for all $i$
and $\sum^N_{i=1}t_i=t$. In order to study the limit $N \to \infty$ we introduce the matrix $(a)_{i,j}$ containing the weights of the splitting for all $N$, 
such that $a_{N,i} \in [0,1)$ when $i\leqslant N$ otherwise $a_{N,i}=0$ and $\sum^N_{i=1}a_{N,i}=1$. The time evolution after applying $N$ pulses 
is governed by the unitary operator
\begin{equation}
 u^{(a)}_N(t)=u \ee^{-ia_{N,1}Ht} u \ee^{-ia_{N,2}Ht} \dots u \ee^{-ia_{N,N}Ht}.
 \label{Uneq}
\end{equation}
We express every $\ee^{-iHa_{N,i}t}$ term by its infinite series and obtain
\begin{eqnarray}
 u^{(a)}_N(t)= u^N-it\left(\sum^N_{i=1} a_{N,i} u^i H (u^\dagger)^i\right)u^N + \dots \nonumber
\end{eqnarray}
Therefore, a natural problem is that of determining the convergence of the weighted Ces\`{a}ro mean:
\begin{equation}
\sum^N_{i=1} a_{N,i} u^i H (u^\dagger)^i=\sum^N_{i=1} a_{N,i} T^i(H),
\label{noneqmean}
\end{equation}
where $T$ is a linear isometry on the Banach space ${\cal B}(\mathcal{H})$, as it has been shown in Proposition \ref{prop1}. 
Most of the next mean ergodic theorem is due to L. W. Cohen\cite{Cohen}:
\begin{theorem}[Cohen's ergodic theorem]
 \label{theorem3}
 If $T$ is a power bounded linear operator on a Banach space ${\cal B}$, $(a)_{i,j}$ is a matrix such that 
 $a_{N,i} \geqslant 0$, $\sum^\infty_{i=1}a_{N,i}=1$ for all $N$, $\lim_{N \to \infty}a_{N,i}=0$ for all $i$,
 \begin{equation}
  \lim_{k \to \infty} \sum^\infty_{i=k} |a_{N,i+1}-a_{N,i}|=0 
  \label{critcond}
 \end{equation}
uniformly in $N$ then 
\begin{equation}
  \Xi_{{\cal B}}=F_{{\cal B}} \oplus \overline{\{x-Tx: x \in {\cal B}\}}. \nonumber
 \end{equation}
The linear operator $Px=\lim \sum^\infty_{i=1} a_{N,i}T^ix$ for $x \in \Xi_{{\cal B}}$ is the projection of $\Xi_{{\cal B}}$ onto $F_{{\cal B}}$.
We have $P=P^2=TP=PT$ and for any $z\in {\cal B}$ the assertions
\begin{eqnarray}
 \text{1)}\,\,\, \lim \sum^\infty_{i=1} a_{N,i}T^iz=0, \nonumber \\
 \text{2)}\,\,\,  z \in \overline{\{x-Tx: x \in {\cal B}\}} \nonumber 
\end{eqnarray}
are equivalent.
\end{theorem}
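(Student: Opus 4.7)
The plan is to adapt the strategy used for Yosida's ergodic theorem (Theorem \ref{theorem1}) to the weighted setting. Write $A_N := \sum_{i=1}^{\infty} a_{N,i}\, T^i$; power boundedness $\|T^i\|_{\text{op}} \leqslant C$ together with $\sum_{i} a_{N,i} = 1$ gives $\|A_N\|_{\text{op}} \leqslant C$ uniformly in $N$. I would organise the argument in three steps: (a) prove strong convergence $A_N \to 0$ on $\overline{\{x-Tx : x \in \mathcal{B}\}}$ and $A_N = I$ on $F_{\mathcal{B}}$; (b) deduce the direct-sum decomposition together with the basic properties of the limit operator $P$ on this sum; (c) show that in fact $\Xi_{\mathcal{B}}$ equals this direct sum.

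For step (a), the central calculation is an Abel-type rearrangement. For $z = y - Ty$,
\begin{equation*}
A_N z = a_{N,1}\, Ty + \sum_{i=2}^{\infty} (a_{N,i} - a_{N,i-1})\, T^{i} y,
\end{equation*}
so $\|A_N z\|_{\text{op}} \leqslant C\,\|y\|_{\text{op}} \bigl(a_{N,1} + \sum_{i \geqslant 2}|a_{N,i}-a_{N,i-1}|\bigr)$. Given $\epsilon > 0$, I would pick $k$ large enough that the tail $\sum_{i \geqslant k}|a_{N,i+1}-a_{N,i}| < \epsilon$ uniformly in $N$ by hypothesis \eqref{critcond}; the remaining finite head is dominated by $2\sum_{j=1}^{k-1} a_{N,j}$ and so vanishes as $N \to \infty$ because each $a_{N,j} \to 0$. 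Hence $A_N z \to 0$ on the dense subspace, and the uniform bound $\|A_N\|_{\text{op}} \leqslant C$ extends this convergence to the closure. On $F_{\mathcal{B}}$ the identity $A_N x = x$ is immediate from $\sum_i a_{N,i}=1$. Step (b) is then automatic: any element of $F_{\mathcal{B}} \cap \overline{\{x-Tx\}}$ equals both itself and its limit $0$, and $P$ acts on the direct sum as the projection onto $F_{\mathcal{B}}$.

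For step (c), take $x \in \Xi_{\mathcal{B}}$ and set $Px := \lim_{N} A_N x$. I would first verify that $Px \in F_{\mathcal{B}}$ by computing
\begin{equation*}
(A_N - T A_N)\, x = a_{N,1}\, T x + \sum_{i=2}^{\infty} (a_{N,i} - a_{N,i-1})\, T^{i} x,
\end{equation*}
which tends to $0$ by exactly the same tail estimate, so $T(Px) = Px$. To obtain $x - Px \in \overline{\{y-Ty\}}$ I would apply Hahn--Banach: if the inclusion failed, some bounded functional $\phi$ would annihilate the subspace, forcing $\phi \circ T = \phi$ and hence $\phi \circ T^{i} = \phi$ for all $i$, while $\phi(x-Px) \neq 0$; but then $\phi(A_N x) = \phi(x)\sum_i a_{N,i} = \phi(x)$, yielding $\phi(Px) = \phi(x)$ in the limit, a contradiction. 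The algebraic identities $P^2 = TP = PT = P$ now follow from $T A_N = A_N T$ together with the fact that the range of $P$ lies in $F_{\mathcal{B}}$. The equivalence of the assertions (1) and (2) is then immediate: (2) $\Rightarrow$ (1) is part (a), while (1) gives $Pz = 0$, whence $z = z - Pz \in \overline{\{x-Tx\}}$.

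The main obstacle is the Abel-summation estimate in step (a); this is where all four hypotheses on the weight matrix --- non-negativity, unit row sums, pointwise convergence of columns to zero, and the uniform tail condition \eqref{critcond} --- enter essentially. Once that estimate is in hand, both the decomposition and the projection properties of $P$ follow by a template nearly identical to the one used for Yosida's theorem, with standard Hahn--Banach separation supplying the reverse inclusion in step (c).
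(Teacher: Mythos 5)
Your proof is correct. Note that the paper does not actually prove Theorem \ref{theorem3}: it is quoted from Cohen (and Krengel) without proof, so there is no in-paper argument to compare against. What can be compared is the central estimate: your Abel rearrangement $A_N(y-Ty)=a_{N,1}Ty+\sum_{i\geqslant 2}(a_{N,i}-a_{N,i-1})T^iy$, bounded via a uniform tail (hypothesis \eqref{critcond}) plus a finite head controlled by $\lim_{N\to\infty}a_{N,j}=0$, is precisely the mechanism behind the relation the paper singles out as the ``key element,'' namely $\lim_{N\to\infty}\bigl(a_{N,1}+\sum_{i=1}^{N-1}|a_{N,i+1}-a_{N,i}|+a_{N,N}\bigr)=0$ in Eq.~\eqref{keyCohen}, which the paper rederives by the same tail-plus-head argument inside the proof of Theorem \ref{theorem4}. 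Your steps (b) and (c) --- directness of the sum, $Px\in F_{\cal B}$ via $(A_N-TA_N)x\to 0$, and the Hahn--Banach separation giving $x-Px\in\overline{\{y-Ty\}}$ --- are the standard Yosida-type completion and are sound; the absolute convergence of $\sum_i a_{N,i}T^ix$ justifies pulling the functional $\phi$ inside the sum. One small point worth making explicit: the paper's $\Xi_{\cal B}$ was defined via the unweighted Ces\`aro mean, whereas you take it to be the set where the weighted means converge; your two inclusions show this set equals $F_{\cal B}\oplus\overline{\{x-Tx\}}$, and combined with Theorem \ref{theorem1} the two definitions coincide, which is exactly the sense in which the paper uses the result.
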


For the linear isometry $T(X)=u X u^\dagger$ on ${\cal B}(\mathcal{H})$ we have
\begin{equation}
 ||\sum^\infty_{i=1} a_{N,i}T^i(X)||_{\text{op}}\leqslant \sum^\infty_{i=1} a_{N,i} ||X||_{\text{op}}=||X||_{\text{op}}. \nonumber
\end{equation}

The weights $a_{i,j}$ of the time's splitting in a dynamical control scheme fulfil the condition $\lim_{N \to \infty} a_{N,i}=0$, 
because the increase of the number of pulses $N$ implies shorter time intervals between the applications of the pulses.
A key element of the proof is the following relation
\begin{equation}
 \lim_{N \to \infty} \left(a_{N,1}+\sum^{N-1}_{i=1} |a_{N,i+1}-a_{N,i}|+a_{N,N} \right) =0, 
 \label{keyCohen}
\end{equation}
which is a consequence of the conditions in Theorem \ref{theorem3}.
\begin{example} 
Let us define matrix $(a)_{i,j}$ in the following way:
\begin{eqnarray}
 a_{N,1}=\frac{2N-1}{N^2}, a_{N,2}=\frac{1}{N^2}, \dots, a_{N,N-1}=\frac{2N-1}{N^2}, a_{N,N}=\frac{1}{N^2}, \,\, N \,\text{is even}, \nonumber
\end{eqnarray}
and 
\begin{eqnarray}
 a_{N,1}=\frac{2N-1}{N^2}, a_{N,2}=\frac{1}{N^2}, \dots, a_{N,N-1}=\frac{1}{N^2}, a_{N,N}=\frac{1}{N}, \,\, N \, \text{is odd}. \nonumber
\end{eqnarray}
This matrix fulfils the following conditions of Theorem \ref{theorem3}:
\begin{eqnarray}
a_{N,i} \geqslant 0,\, \sum^\infty_{i=1}a_{N,i}=1, \, \forall N, \,\, \lim_{N \to \infty}a_{N,i}=0, \forall i \nonumber
\end{eqnarray}
but 
\begin{equation}
 \lim_{k \to \infty} \sum^\infty_{i=k} |a_{N,i+1}-a_{N,i}|=0 \nonumber
\end{equation}
is not uniformly in $N$. Eq. \eqref{keyCohen} yields
\begin{equation}
 \lim_{N \to \infty} \left(a_{N,1}+\sum^{N-1}_{i=1} |a_{N,i+1}-a_{N,i}|+a_{N,N} \right) =2. \nonumber
\end{equation}
Thus if we negate proposition \eqref{critcond} but we keep the conditions
\begin{eqnarray}
 &&a_{N,i} \geqslant 0, \quad \sum^\infty_{i=1}a_{N,i}=1, \quad \forall N, \nonumber \\
 &&\lim_{N \to \infty}a_{N,i}=0, \quad \forall i, \nonumber
\end{eqnarray}
then this implies that proposition \eqref{keyCohen} fails. This shows that 
\begin{equation}
  \lim_{k \to \infty} \sum^\infty_{i=k} |a_{N,i+1}-a_{N,i}|=0 \nonumber
 \end{equation}
uniformly in $N$ is not only sufficient but also necessary for the proof of Theorem  \ref{theorem3}.
\end{example}

\begin{example}
 Let $f:[0,1] \rightarrow \mathbb R^+$ be Riemann integrable with $\int^1_0 f(x)dx=1$ and 
\begin{equation}
 a_{N,i}= \int^{\frac{i}{N}}_{\frac{i-1}{N}} f(x)dx. \nonumber
\end{equation}
First we have
\begin{equation}
 a_{N,1}+a_{N,N}\leqslant \left (\sup_{x \in [0, \frac{1}{N}]} f(x)+\sup_{x \in [\frac{N-1}{N}, 1]} f(x) \right)/N \nonumber
\end{equation}
which tends to zero as $N \to \infty$, because $f$ is bounded on the interval $[0,1]$. Now for any $1 \leqslant i < N$ 
\begin{equation}
 |a_{N,i+1}-a_{N,i}|\leqslant 2\left (\sup_{x \in [\frac{i-1}{N}, \frac{i+1}{N}]} f(x)-\inf_{x \in [\frac{i-1}{N}, \frac{i+1}{N}]} f(x) \right)/N. \nonumber
\end{equation}
Due to the fact that the upper and lower Riemann sums of $f$ with respect to the partition $\{[0, \frac{2}{N}], \dots ,[\frac{N-2}{N},1]\}$ of $[0,1]$ 
tend to the same value as $N \to \infty$, we have
\begin{equation}
 \lim_{N \to \infty} \sum^{N-1}_{i=1}|a_{N,i+1}-a_{N,i}|=0. \nonumber
\end{equation}
In the case of Uhrig's dynamical decoupling $f(x)=\frac{\pi}{2} \sin(\pi x)$ \cite{Uhrig}.
\end{example}

The following theorem extends the results of Theorem \ref{theorem2}.
\begin{theorem}
 \label{theorem4}
 Let $u$ be a unitary operator in the Hilbert space $\mathcal{H}$ and the set $\Xi_{{\cal B}(\mathcal{H})}$ is defined
 by the power bounded linear map $T(X)=u X u^\dagger$. Let $P$ be the projection 
 operator which maps $\Xi_{{\cal B}(\mathcal{H})}$ onto the linear subspace $F_{{\cal B}(\mathcal{H})}=\{X \in {\cal B}(\mathcal{H}): [u,X]=0\}$. 
  Suppose a matrix $(a)_{i,j}$ such that 
 $1>a_{N,i} \geqslant 0$, $\sum^N_{i=1}a_{N,i}=1$ for all $N$, $\lim_{N \to \infty}a_{N,i}=0$ for all $i$ and
 \begin{equation}
  \lim_{k \to \infty} \sum^\infty_{i=k} |a_{N,i+1}-a_{N,i}|=0
  \label{critcond2}
 \end{equation}
uniformly in $N$. Then, for any $X \in \Xi_{{\cal B}(\mathcal{H})}$ 
 and $t \in \mathbb{C}$ with $|t|<\infty$
  \begin{equation}
  \lim_{N \to \infty} ||u \ee^{a_{N,1}Xt} u \ee^{a_{N,2}Xt} \dots u \ee^{a_{N,N}Xt}-\ee^{P(X)t} u^N||_{\text{op}} = 0.
\label{limit2}
 \end{equation}
\end{theorem}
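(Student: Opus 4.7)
The plan is to parallel the proof of Theorem \ref{theorem2}, replacing the splitting theorem of Yosida by that of Cohen (Theorem \ref{theorem3}) and replacing Ces\`aro averages $N^{-1}\sum_{k=1}^N T^k$ by the weighted averages $\sum_{i=1}^{N}a_{N,i}T^i$. First I would use the identity $u\,\ee^{A}=\ee^{uAu^\dagger}u$ together with the unitary invariance of $||\cdot||_{\text{op}}$ to reduce \eqref{limit2} to showing that
\[
 ||\ee^{a_{N,1}uXu^\dagger t}\,\ee^{a_{N,2}u^2X(u^\dagger)^2 t}\cdots \ee^{a_{N,N}u^NX(u^\dagger)^N t}-\ee^{P(X)t}||_{\text{op}}\xrightarrow[N\to\infty]{}0.
\]
Theorem \ref{theorem3} splits any $X\in\Xi_{{\cal B}(\mathcal{H})}$ as $X=X_0+W$ with $X_0\in F_{{\cal B}(\mathcal{H})}$ and $W\in\overline{\{Y-uYu^\dagger:Y\in{\cal B}(\mathcal{H})\}}$, so the proof can be organised in the same three cases as Theorem \ref{theorem2}.

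The case $X\in F_{{\cal B}(\mathcal{H})}$ is immediate since then $u^iX(u^\dagger)^i=X$ and $\sum_i a_{N,i}=1$. For $X=Y-uYu^\dagger$, the quantity replacing the $\mathcal{O}(1/N)$ bounds of Theorem \ref{theorem2} is
\[
 \delta_N:=a_{N,1}+\sum_{i=1}^{N-1}|a_{N,i+1}-a_{N,i}|+a_{N,N},
\]
which tends to zero by Eq. \eqref{keyCohen}. Applying Abel summation to $\sum_{i=1}^{n}a_{N,i}(u^iY(u^\dagger)^i-u^{i+1}Y(u^\dagger)^{i+1})$ yields
\[
 ||\textstyle\sum_{i=1}^{n}a_{N,i}u^iX(u^\dagger)^i||_{\text{op}}\leqslant ||Y||_{\text{op}}\Bigl(a_{N,1}+\sum_{i=1}^{n-1}|a_{N,i+1}-a_{N,i}|+a_{N,n}\Bigr)\leqslant 2\,||Y||_{\text{op}}\,\delta_N
\]
uniformly in $n$, where the last step uses the telescoping estimate $a_{N,n}\leqslant a_{N,1}+\sum_{i=1}^{n-1}|a_{N,i+1}-a_{N,i}|$. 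Plugging this partial-sum bound into inequality \eqref{ineq2} with $A=\sum_{i\leqslant n}a_{N,i}u^iX(u^\dagger)^i t$ and $B=a_{N,n+1}u^{n+1}X(u^\dagger)^{n+1}t$, and following the telescoping scheme of Theorem \ref{theorem2} with
\[
 S_n=\ee^{\sum_{i=1}^{n}a_{N,i}u^iX(u^\dagger)^i t}\prod_{j=n+1}^{N}\ee^{a_{N,j}u^jX(u^\dagger)^j t},
\]
produces $||S_n-S_{n+1}||_{\text{op}}\leqslant \ee^{2|t|\,||Y||_{\text{op}}}\,a_{N,n+1}\,\delta_N\,||Y||_{\text{op}}^2\,|t|^2$ plus higher-order corrections. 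Summing in $n$ then exploits $\sum_n a_{N,n+1}\leqslant 1$ to produce a total bound of order $\delta_N\to 0$.

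The mixed case $X=X_0+(Y-uYu^\dagger)$ is handled by redefining $S_n$ with the prefactor $\ee^{(\sum_{i\leqslant n}a_{N,i})X_0 t}$ and reapplying \eqref{enuse}, once more reaching a bound of order $\delta_N$. The final density step, in which $W\in\overline{\{Y-uYu^\dagger\}}$ is approximated by an element $Y-uYu^\dagger$ of the dense subspace, runs as in Theorem \ref{theorem2}; the non-equidistant analogue of \eqref{specuse} is in fact cleaner, because its telescoped bound uses $\sum_i a_{N,i}=1$ in place of $N\cdot(1/N)$ and directly produces a factor $||X_1-X_2||_{\text{op}}\,|t|\,\ee^{|t|(||X_1||_{\text{op}}+||X_2||_{\text{op}})}$.

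I expect the Abel-summation step to be the main obstacle, because it is precisely there that the uniform-in-$N$ hypothesis \eqref{critcond2} enters in an essential way: it is what guarantees both the vanishing of $\delta_N$ recorded in Eq. \eqref{keyCohen} and the uniform control $\delta_{N,n}\leqslant 2\delta_N$ on partial-sum norms. As the counter-example preceding Theorem \ref{theorem4} shows, without this condition $\delta_N$ need not tend to zero and the telescoping bound for $||S_n-S_{n+1}||_{\text{op}}$ fails to survive summation over $N-1$ increments, collapsing the whole scheme.
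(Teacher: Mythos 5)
Your proposal follows essentially the same route as the paper's proof: reduction by unitary invariance, Cohen's splitting, the Abel-summation bound on the weighted partial sums (the paper's Eq.~\eqref{ineqYeqn}), the telescoping operators $S_n$ controlled via \eqref{ineq2} and \eqref{enuse}, the vanishing of $a_{N,1}+\sum_{i}|a_{N,i+1}-a_{N,i}|+a_{N,N}$ from \eqref{keyCohen}, and the final density/approximation step using $\sum_i a_{N,i}=1$. The only detail you gloss over is the verification that the higher-order ($i\geqslant 3$) Taylor terms are dominated by the $i=2$ contribution, which the paper handles explicitly via the ratios $C^{(i,n)}/C^{(2,n)}$, but this is routine and does not affect the correctness of the plan.
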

\begin{proof}
Here, we are going to mimic the proof of Theorem \ref{theorem2} to obtain the statement of \eqref{limit2}. Therefore, we start with
$X \in \{Y-uYu^\dagger: Y \in {\cal B}(\mathcal{H})\}$  which means that there exists a $Y \in {\cal B}(\mathcal{H})$ such that $X=Y-uYu^\dagger$ and
\begin{eqnarray}
\label{ineqYeqn}
||\sum^n_{i=1}a_{N,i}u^i (Y-uYu^\dagger) (u^\dagger)^i ||_{\text{op}}\leqslant ||Y||_{\text{op}} \left(a_{N,1}+\sum^{n-1}_{i=1}|a_{N,i+1}-a_{N,i}|+a_{N,n} \right), \nonumber \\ 
\end{eqnarray}
for $n>1$ where used the unitarily invariant property of $||.||_{\text{op}}$.

Let
\begin{equation}
 S_n=\ee^{\sum^n_{i=1}a_{N,i}u^i X (u^\dagger)^i t} \prod^N_{j=n+1} \ee^{a_{N,j}u^j X (u^\dagger)^j t},\, 1\leqslant n < N \nonumber
\end{equation}
and by applying inequality \eqref{ineq2} for $A= a_{N,1}u X u^\dagger t$ and $B=a_{N,2}u^2 X (u^\dagger)^2 t$ we obtain 
\begin{eqnarray}
||S_1-S_2||_{\text{op}}\leqslant \ee^{|t| \sum^N_{i=3} a_{N,i} ||X||_{\text{op}}} \sum^\infty_{i=2} \frac{2 |t|^i}{i!} C^{(i,1)} ||Y||^i_{\text{op}} 
\leqslant \ee^{2 |t|\, ||Y||_{\text{op}}} \sum^\infty_{i=2} \frac{2 |t|^i}{i!} C^{(i,1)} ||Y||^i_{\text{op}} \nonumber
\end{eqnarray}
with 
\begin{eqnarray}
C^{(i,1)}=2^i \sum^{i-1}_{j=1} 
 \left( {i \choose j}-1 \right) a^j_{N,1} a^{i-j}_{N,2}. \label{Cels}
\end{eqnarray}
For $1<n<N-1$, where $A=\sum^n_{i=1}a_{N,i}u^i X (u^\dagger)^i t$ and $B=a_{N,n+1}u^{n+1} X (u^\dagger)^{n+1} t$ are substituted in \eqref{ineq2} we have
\begin{eqnarray}
||S_n-S_{n+1}||_{\text{op}}\leqslant \ee^{|t| \sum^N_{i=n+2} a_{N,i} ||X||_{\text{op}}} \sum^\infty_{i=2} \frac{2 |t|^i}{i!} C^{(i,n)} ||Y||^i_{\text{op}} 
\leqslant \ee^{2|t|\,||Y||_{\text{op}}} \sum^\infty_{i=2} \frac{2 |t|^i}{i!} C^{(i,n)} ||Y||^i_{\text{op}} \nonumber 
\end{eqnarray}
with
\begin{eqnarray}
C^{(i,n)}= \sum^{i-1}_{j=1} 
 \left( {i \choose j}-1 \right)  2^{i-j} a^{i-j}_{N,n+1} \left(a_{N,1}+\sum^{n-1}_{k=1} |a_{N,k+1}-a_{N,k}|+a_{N,n} \right)^j. \nonumber \\ 
 \label{Calt}
\end{eqnarray}
Furthermore,
\begin{eqnarray}
 ||S_{N-1}-\ee^{\sum^N_{i=1}a_{N,i}u^i X (u^\dagger)^i t}||_{\text{op}} \leqslant \sum^\infty_{i=2} \frac{2 |t|^i}{i!} C^{(i,N-1)} ||Y||^i_{\text{op}}, \nonumber 
\end{eqnarray}
and
\begin{eqnarray}
 ||\ee^{\sum^N_{i=1}a_{N,i}u^i X (u^\dagger)^i t}-\ee^{P(X)t}||_{\text{op}} \leqslant 
 \sum^\infty_{n=1} \frac{\left(||Y||_{\text{op}}(a_{N,1}+\sum^{N-1}_{i=1} |a_{N,i+1}-a_{N,i}|+a_{N,N})|t|\right)^n}{n!}. \nonumber
\end{eqnarray}
We used the fact that $P(X)=P(Y-uYu^\dagger)=0$. Thus
\begin{eqnarray}
 &&||\ee^{a_{N,1}uXu^\dagger t} \ee^{a_{N,2}u^2 X (u^\dagger)^2 t} \dots  \ee^{a_{N,N}u^N X (u^\dagger)^N t}-\ee^{P(X)t}||_{\text{op}}\leqslant \nonumber \\
&& \leqslant \sum^{N-2}_{n=1} ||S_n-S_{n+1}||_{\text{op}}+||S_{N-1}-\ee^{\sum^N_{i=1}a_{N,i}u^i X (u^\dagger)^i t}||_{\text{op}} \nonumber \\
&& + ||\ee^{\sum^N_{i=1}a_{N,i}u^i X (u^\dagger)^i t}-\ee^{P(X)t}||_{\text{op}}\leqslant  \nonumber \\
&&  \leqslant \ee^{2 |t| \, ||Y||_{\text{op}}} \sum^{N-2}_{k=1} \sum^\infty_{i=2} \frac{2 |t|^i}{i!} C^{(i,k)} ||Y||^i_{\text{op}}+\sum^\infty_{i=2} 
\frac{2 |t|^i}{i!} C^{(i,N-1)} ||Y||^i_{\text{op}} \nonumber \\
 &&+\sum^\infty_{n=1} \frac{\left(||Y||_{\text{op}}(a_{N,1}+\sum^{N-1}_{i=1} |a_{N,i+1}-a_{N,i}|+a_{N,N})|t|\right)^n}{n!}.
 \label{optim}
\end{eqnarray}
We consider an $\epsilon > 0$ and by the hypothesis of uniformity in \eqref{critcond2} there is a $k_\epsilon$ such that
\begin{equation}
 a_{N,1}+\sum^{N-1}_{k=1} |a_{N,k+1}-a_{N,k}|+a_{N,N} \leqslant 2 \sum^{k_\epsilon}_{k=1} a_{N,k}+\epsilon \nonumber
\end{equation}
for $n>1$ and due to the condition $\lim_{N \to \infty}a_{N,i}=0$ there is an $N_\epsilon$ such that
\begin{equation}
 2\sum^{k_\epsilon}_{k=1} a_{N,k} < \epsilon,\,\,\, N>N_\epsilon. \nonumber
\end{equation}
Thus
\begin{equation}
a_{N,1}+\sum^{N-1}_{i=1} |a_{N,i+1}-a_{N,i}|+a_{N,N}<2 \epsilon, \,\,\, N>N_\epsilon \nonumber  
\end{equation}
which is equivalent to (see also \cite{Jardas})
\begin{equation}
 \lim_{N \to \infty} \sum^N_{i=1} |a_{N,i+1}-a_{N,i}|=0. \label{1use2}
\end{equation}
We have for $i>1$ and $N>n>1$
\begin{eqnarray}
C^{(i,n)}&=& \sum^{i-1}_{j=1} 
 \left( {i \choose j}-1 \right)  2^{i-j} a^{i-j}_{N,n+1} \left(a_{N,1}+\sum^{n-1}_{k=1} |a_{N,k+1}-a_{N,k}|+a_{N,n} \right)^j \nonumber \\
 &<& \Big( a_{N,1}+\sum^{n-1}_{k=1} |a_{N,k+1}-a_{N,k}|+a_{N,n} + 2 a_{N,n+1} \Big)^i \nonumber
\end{eqnarray}
and due to \eqref{1use2} and the properties of matrix $(a)_{i,j}$ we see that $C^{(i,n)} \xrightarrow[N \to \infty]{} 0 $. When $n=1$, then
\begin{equation}
 C^{(i,1)} < 2^i(a_{N,1}+a_{N,2})^i \xrightarrow[N \to \infty]{} 0. \nonumber
\end{equation}
Let us introduce 
\begin{equation}
 x_N =
      \begin{cases}
        \max\{a_{N,1}+\sum^{n-1}_{k=1} |a_{N,k+1}-a_{N,k}|+a_{N,n}, a_{N,n+1} \} & N>n>1, \\
        \max\{a_{N,1}, a_{N,2} \} & n=1,
      \end{cases}
      \nonumber
\end{equation}
which has the property $x_N \xrightarrow[N \to \infty]{} 0$. For $i>2$ we get
\begin{equation}
 \frac{C^{(i,1)}}{C^{(2,1)}} < 2^{i-2} x^{i-2}_N, \quad  \frac{C^{(i,n)}}{C^{(2,n)}} < 3^{i} x^{i-2}_N \nonumber
\end{equation}
which means that the terms $C^{(2,n)}$ are the slowest to converge to zero. Therefore we consider in Eq. \eqref{optim}
the case of  $i=2$, which yields
\begin{eqnarray}
 \sum^{N-2}_{k=1} C^{(2,k)}&=&4 a_{N,2}a_{N,1}+2 a_{N,3} \left(a_{N,1}+|a_{N,2}-a_{N,1}|+a_{N,2} \right)+\dots +\nonumber \\
 &+&2a_{N,N-1}\left(a_{N,1}+\sum^{N-3}_{i=1} |a_{N,i+1}-a_{N,i}|+a_{N,{N-2}}\right)\leqslant \nonumber \\
 &=&2 a_{N,2}a_{N,1}+2 \sum^{N-1}_{i=2}a_{N,i} a_{N,1}+2 \sum^{N-1}_{i=3}a_{N,i} |a_{N,2}-a_{N,1}|+ \dots \leqslant \nonumber \\
 &\leqslant&2 a_{N,2}a_{N,1}+2a_{N,1}+2\sum^{N-3}_{i=1} |a_{N,i+1}-a_{N,i}|+2a_{N,N-2} \xrightarrow[N \to \infty]{} 0, \nonumber
\end{eqnarray}
where we used the relation $\sum^{N-1}_{i=k} a_{N,i} \leqslant 1 $ for all $1\leqslant k \leqslant N-1$ and $\lim_{N \to \infty} a_{N,i}=0$ for all $i$. Now, combining all 
these results in \eqref{optim} we find that
for all $X \in \{Y-uYu^\dagger: Y \in {\cal B}(\mathcal{H})\}$
\begin{eqnarray}
 &&||\ee^{a_{N,1}uXu^\dagger t} \ee^{a_{N,2}u^2 X (u^\dagger)^2 t} \dots  \ee^{a_{N,N}u^N X (u^\dagger)^N t}-\ee^{P(X)t}||_{\text{op}} 
 \xrightarrow[N \to \infty]{} 0. \label{b1}
\end{eqnarray}
If $X \in F_{{\cal B}(\mathcal{H})}$ then the convergence of \eqref{limit2} is trivial:
\begin{equation}
||\ee^{a_{N,1}uXu^\dagger t} \ee^{a_{N,2}u^2 X (u^\dagger)^2 t} \dots  \ee^{a_{N,N}u^N X (u^\dagger)^N t}-\ee^{P(X)t}||_{\text{op}}=
||\ee^{\sum^N_{i=1}a_{N,i}Xt}-\ee^{X\,t}||_{\text{op}}=0. \nonumber
\end{equation}
In the last part of the proof we consider all $X \in \Xi_{{\cal B}(\mathcal{H})}$. According to Theorem \ref{theorem3} 
\begin{equation}
 X=X_0 + W \nonumber
\end{equation}
where $X_0 \in F_{{\cal B}(\mathcal{H})}$ and $W \in \overline{\{X-uXu^\dagger: X \in {\cal B}(\mathcal{H})\}}$. First, we take 
an element $X \in \Xi_{{\cal B}(\mathcal{H})}$ such that $X=X_0+Y-uYu^\dagger$. In order to use the result in \eqref{enuse} we
introduce
\begin{equation}
 S_n=\ee^{\sum^n_{i=1} a_{N,i} X_0 t}  \ee^{\sum^n_{i=1} a_{N,i} u^i (Y-uYu^\dagger) (u^\dagger)^i t} 
 \prod^N_{j=n+1}  \ee^{a_{N,j} u^j X (u^\dagger)^j t},\, 1\leqslant n < N. 
 \nonumber
\end{equation}
Then
\begin{eqnarray}
&&||\ee^{a_{N,1} uXu^\dagger t} \ee^{a_{N,2} u^2 X (u^\dagger)^2 t} \dots  \ee^{a_{N,N} u^N X (u^\dagger)^N t}-\ee^{P(X)t}||_{\text{op}}
\leqslant \label{bvc2} \\
&&\leqslant ||\ee^{a_{N,1} uXu^\dagger t} \ee^{a_{N,2} u^2 X (u^\dagger)^2 t} \dots  \ee^{a_{N,N} u^N X (u^\dagger)^N t}-S_1||_{\text{op}}+
 \sum^{N-2}_{n=1} ||S_n-S_{n+1}||_{\text{op}} \nonumber \\
&& + ||S_{N-1}-\ee^{X_0 t}  \ee^{\sum^N_{i=1} a_{N,i} u^i (Y-uYu^\dagger) (u^\dagger)^i t}||_{\text{op}} +
||\ee^{X_0 t}  \ee^{\sum^N_{i=1} a_{N,i} u^i (Y-uYu^\dagger) (u^\dagger)^i t}-\ee^{P(X)t}||_{\text{op}}. \nonumber
\end{eqnarray}
Substituting $A=a_{N,1} X_0 t$ and $B=a_{N,1} u (Y-uYu^\dagger) u^\dagger t$ in \eqref{ineq2} we get
\begin{eqnarray}
 &&||\ee^{uXu^\dagger t/N} \ee^{u^2 X (u^\dagger)^2 t/N} \dots  \ee^{u^N X (u^\dagger)^N t/N}-S_1||_{\text{op}} \leqslant 
 \nonumber \\
 && \leqslant ||\ee^{uXu^\dagger t/N}-\ee^{X_0 t/N} \ee^{u (Y-uYu^\dagger) u^\dagger t/N}||_{\text{op}} \ee^{||X||_{\text{op}}\,|t|}
 \leqslant \nonumber \\
 && \leqslant 2 a^2_{N,1} |t|^2||X_0||_{\text{op}}\,||Y||_{\text{op}} \ee^{||X||_{\text{op}}\,|t|}+
\mathcal{O}\left(a^3_{N,1}\right). \label{k12}
\end{eqnarray}
We set
\begin{eqnarray}
 A&=&a_{N,n+1} X_0 t, \, \, B=\sum^n_{i=1} a_{N,i} u^i (Y-uYu^\dagger) (u^\dagger)^i t, \nonumber \\
 C&=&  a_{N,n+1} u^{n+1} (Y-uYu^\dagger) (u^\dagger)^{n+1} t, \nonumber 
\end{eqnarray}
in \eqref{enuse} where $1 < n < N-1$ and we obtain
\begin{eqnarray}
 &&||S_n-S_{n+1}||_{\text{op}}\leqslant \ee^{||X||_{\text{op}}\,|t|} \times \nonumber \\
 \times &&||  \ee^{\sum^n_{i=1} a_{N,i} u^i (Y-uYu^\dagger) (u^\dagger)^i t}\ee^{ a_{N,n+1} u^{n+1} X (u^\dagger)^{n+1} t} -
 \ee^{a_{N,n+1} X_0 t}  \ee^{\sum^{n+1}_{i=1} a_{N,i} u^i (Y-uYu^\dagger) (u^\dagger)^i t}||_{\text{op}}  \nonumber \\
\leqslant && \ee^{||X||_{\text{op}}\,|t|}\sum^\infty_{i=2} \frac{2 |t|^i}{i!} \Big \{ C^{(i,n)} ||Y||^i_{\text{op}} + 
\sum^{i-1}_{j=1}  \left[ {i \choose j}-1 \right]  2^{i-j} a^i_{N,n+1} ||X_0||^j_{\text{op}} ||Y||^{i-j}_{\text{op}}
\nonumber \\
 + && \sum^{i-1}_{j=1} 
 {i \choose j}  2^{i-j} a^{j}_{N,n+1} \left(a_{N,1}+\sum^{n-1}_{k=1} |a_{N,k+1}-a_{N,k}|+a_{N,n} \right)^{i-j} 
 ||X_0||^j_{\text{op}} ||Y||^{i-j}_{\text{op}} \Big \},
\label{k22}
\end{eqnarray}
where we used the definition in \eqref{Calt}. In the case when $n=1$
\begin{eqnarray}
 &&||S_1-S_2||_{\text{op}}\leqslant  \nonumber \\
\leqslant && \ee^{||X||_{\text{op}}\,|t|}\sum^\infty_{i=2} \frac{2 |t|^i}{i!} \Big \{ C^{(i,1)} ||Y||^i_{\text{op}} + 
\sum^{i-1}_{j=1}  \left[ {i \choose j}-1 \right]  2^{i-j} a^i_{N,2} ||X_0||^j_{\text{op}} ||Y||^{i-j}_{\text{op}}
\nonumber \\
 + && \sum^{i-1}_{j=1} 
 {i \choose j}  2^{i-j} a^{j}_{N,2} a^{i-j}_{N,1} ||X_0||^j_{\text{op}} ||Y||^{i-j}_{\text{op}} \Big \}.
\label{k22g}
\end{eqnarray}
We also have
\begin{eqnarray}
 &&||S_{N-1}-\ee^{X_0 t}  \ee^{\sum^N_{i=1} a_{N,i} u^i (Y-uYu^\dagger) (u^\dagger)^i t}||_{\text{op}} \leqslant \nonumber \\
\leqslant && \ee^{||X||_{\text{op}}\,|t|}\sum^\infty_{i=2} \frac{2 |t|^i}{i!} \Big \{ C^{(i,N-1)} ||Y||^i_{\text{op}} + 
\sum^{i-1}_{j=1}  \left[ {i \choose j}-1 \right]  2^{i-j} a^i_{N,N} ||X_0||^j_{\text{op}} ||Y||^{i-j}_{\text{op}}
\nonumber \\
 + && \sum^{i-1}_{j=1} 
 {i \choose j}  2^{i-j} a^{j}_{N,N} \left(a_{N,1}+\sum^{N-2}_{k=1} |a_{N,k+1}-a_{N,k}|+a_{N,N-1} \right)^{i-j} 
 ||X_0||^j_{\text{op}} ||Y||^{i-j}_{\text{op}} \Big \} \nonumber \\ \label{k32}
\end{eqnarray}
by choosing 
\begin{eqnarray}
 A&=&a_{N,N} X_0 t, \, \, B=\sum^{N-1}_{i=1} a_{N,i} u^i (Y-uYu^\dagger) (u^\dagger)^i t, \nonumber \\
 C&=&  a_{N,N} u^{N} (Y-uYu^\dagger) (u^\dagger)^{N} t, \nonumber 
\end{eqnarray}
in \eqref{enuse}. For the last term we get
\begin{eqnarray}
 &&||\ee^{X_0 t}  \ee^{\sum^N_{i=1} a_{N,i} u^i (Y-uYu^\dagger) (u^\dagger)^i t}-\ee^{P(X)t}||_{\text{op}}=\nonumber \\
 &&=||\ee^{X_0 t}  \ee^{\sum^N_{i=1} a_{N,i} u^i (Y-uYu^\dagger) (u^\dagger)^i t}-\ee^{X_0 t}\ee^{P(Y-uYu^\dagger)t} ||_{\text{op}} \nonumber \\
 &&\leqslant \ee^{||X||_{\text{op}}\,|t|} || \ee^{\sum^N_{i=1} a_{N,i} u^i (Y-uYu^\dagger) (u^\dagger)^i t}-\ee^{P(Y-uYu^\dagger)t} ||_{\text{op}}
 \xrightarrow[N \to \infty]{} 0,
 \label{k42}
\end{eqnarray}
where we used the result from \eqref{b1} and $P(Y-uYu^\dagger)=0$. Substituting inequalities \eqref{k12}, \eqref{k22}, \eqref{k22g}, 
\eqref{k32}, and \eqref{k42} in \eqref{bvc2} we see that terms associated with $i=2$ are the slowest to converge. We have already shown using 
the properties of the matrix $(a)_{i,j}$ that
\begin{eqnarray}
 C^{(i,N-1)} \xrightarrow[N \to \infty]{} 0,  \nonumber \\
 \sum^{N-2}_{k=1} C^{(2,k)} \xrightarrow[N \to \infty]{} 0.  \nonumber 
\end{eqnarray}
We have also
\begin{eqnarray}
 &&a_{N,2} a_{N,1} + \sum^{N-1}_{i=2}a_{N,i+1} \left(a_{N,1}+\sum^{i-1}_{k=1} |a_{N,k+1}-a_{N,k}|+a_{N,i} \right) \leqslant \nonumber \\
 \leqslant && \left(a_{N,1}+\sum^{N-2}_{k=1} |a_{N,k+1}-a_{N,k}|+a_{N,N-1} \right) \sum^{N-1}_{i=1}a_{N,i+1} \leqslant \nonumber \\
 \leqslant && \left(a_{N,1}+\sum^{N-2}_{k=1} |a_{N,k+1}-a_{N,k}|+a_{N,N-1} \right) \xrightarrow[N \to \infty]{} 0.  \nonumber 
\end{eqnarray}
The only term left is $\sum^{N}_{i=1}a^2_{N,i}$ and we are going to argue in the following way: for every $N$ there exists a $k^*$ such
that for all $1\leqslant i \leqslant N$ 
\begin{equation}
 a_{N,i} \leqslant a_{N,k^*} \nonumber
\end{equation}
and thus
\begin{eqnarray}
 \sum^{N}_{i=1}a^2_{N,i} \leqslant \sum^{N}_{i=1}a_{N,i} a_{N,k^*} = a_{N,k^*} \xrightarrow[N \to \infty]{} 0, \nonumber 
\end{eqnarray}
because for all $i$ we have the relation $a_{N,i} \xrightarrow[N \to \infty]{} 0$. This results that for any element 
$X\in F_{{\cal B}(\mathcal{H})} \oplus \{Y-uYu^\dagger: Y \in {\cal B}(\mathcal{H})\}$
\begin{equation}
 ||\ee^{a_{N,1}uXu^\dagger t} \ee^{a_{N,2}u^2 X (u^\dagger)^2 t} \dots  \ee^{a_{N,N}u^N X (u^\dagger)^N t}-\ee^{P(X)t}||_{\text{op}} \xrightarrow[N \to \infty]{} 0.
\end{equation}

When $X \in F_{{\cal B}(\mathcal{H})} \oplus \overline{\{X-uXu^\dagger: X \in {\cal B}(\mathcal{H})\}}$ with $X=X_0+ W$  
and $W$ it is not of the form $X-uXu^\dagger$ then we reuse the strategy applied for the derivation of 
\eqref{specuse}.  We introduce for this case
\begin{eqnarray}
 P_0&=&\prod^{N}_{j=1} \ee^{a_{N,j} u^j X_1 (u^\dagger)^j t} \nonumber \\
 P_i&=&\prod^{N-i}_{j=1} \ee^{a_{N,j} u^j X_1 (u^\dagger)^j t}\prod^{N}_{j=N-i+1} \ee^{a_{N,j} u^j X_2 (u^\dagger)^j t},\quad 0 < i <N. \nonumber
\end{eqnarray}
Then
\begin{eqnarray}
 &&||\ee^{a_{N,1} u X_1 u^\dagger t} \dots  \ee^{a_{N,N} u^N X_1 (u^\dagger)^N t}-\ee^{a_{N,1} u X_2 u^\dagger t} \dots  
 \ee^{ a_{N,N} u^N X_2 (u^\dagger)^N t}||_{\text{op}} \leqslant
\nonumber \\
&& \leqslant \sum^{N-2}_{i=0}||P_i-P_{i+1}||_{\text{op}}+||P_{N-1}-\ee^{a_{N,1} uX_2u^\dagger t} \dots  
\ee^{ a_{N,N} u^N X_2 (u^\dagger)^N t}||_{\text{op}}. \nonumber
\end{eqnarray}
For $0 < i \leqslant N-1$,
\begin{eqnarray}
||P_{i-1}-P_i||_{\text{op}} \leqslant  \ee^{ \left(||X_1||_{\text{op}}+ ||X_1-X_2 ||_{\text{op}} \right)\,|t|} \,
|| \ee^{a_{N,i} X_1 t}-\ee^{a_{N,i} X_2 t}||_{\text{op}},\nonumber
\end{eqnarray}
where the derivation is done similarly as in Theorem \ref{theorem2}. We also have
\begin{eqnarray}
&&||P_{N-1}-\ee^{a_{N,1} u X_2 u^\dagger t} \dots  \ee^{a_{N,N} u^N X_2 (u^\dagger)^N t}||_{\text{op}}\leqslant \nonumber \\ 
 &&\leqslant \ee^{ \left(||X_1||_{\text{op}}+ ||X_1-X_2 ||_{\text{op}} \right)\,|t|} \,
|| \ee^{a_{N,N} X_1 t }-\ee^{a_{N,N} X_2 t}||_{\text{op}}.
\nonumber
\end{eqnarray}
Hence,
\begin{eqnarray}
 &&||\ee^{a_{N,1}uX_1u^\dagger t} \dots  \ee^{a_{N,N} u^N X_1 (u^\dagger)^N t}-\ee^{a_{N,1}uX_2u^\dagger t} \dots  
 \ee^{a_{N,N}u^N X_2 (u^\dagger)^N t}||_{\text{op}}
\nonumber \\
&& \leqslant \ee^{ \left(||X_1||_{\text{op}}+ ||X_1-X_2 ||_{\text{op}} \right)\,|t|} \,  
\sum^N_{i=1} || \ee^{a_{N,i} X_1 t}-\ee^{a_{N,i}X_2 t}||_{\text{op}}. \nonumber
\end{eqnarray}
We apply also inequality \eqref{specuse2} to have the following relation
\begin{eqnarray}
||\ee^{a_{N,i} X_1 t}-\ee^{a_{N,i}X_2t} ||_{\text{op}} &\leqslant&  ||X_1-X_2||_{\text{op}}
 a_{N,i} |t| \sum^\infty_{k=0} \frac{a^k_{N,i}\left(||X_1||_{\text{op}}+||X_2||_{\text{op}}\right)^k}{(k+1)!} |t|^k. \nonumber
\end{eqnarray}
Let $X_1=X_0+W$ where $W$ it is not of the form $X-uXu^\dagger$ and $ t \in \mathbb{C}$ with $|t|<\infty$. 
Then for any $\epsilon > 0$ we can take an arbitrary $\epsilon' > 0$ satisfying 
\begin{equation}
\ee^{ \left(||X_1||_{\text{op}}+ \epsilon' \right)\,|t|} |t| \epsilon' (1+ \epsilon')+ \epsilon' \leqslant \epsilon, \nonumber
\end{equation}
and for this $\epsilon'$ there exists $X_2 \in {\cal B}(\mathcal{H})$ and $Y \in {\cal B}(\mathcal{H})$ such that
$X_2=X_0+Y-uYu^\dagger$ and
\begin{equation}
 ||X_1-X_2||_{\text{op}}< \epsilon'. \nonumber
\end{equation}
Let us recall the result in Eq. \eqref{b1}, which means that for $\epsilon'$ we can take an $N'$ such that for every $N>N'$ 
\begin{equation}
||\ee^{a_{N,1} uX_2u^\dagger t} \ee^{a_{N,2} u^2 X_2 (u^\dagger)^2 t} \dots  \ee^{a_{N,N} 
u^N X_2 (u^\dagger)^N t}-
\ee^{P(X_1)t}||_{\text{op}}< \epsilon' \label{b2}
\end{equation}
where we used the relation $P(X_1)=P(X_2)=X_0$. On the other hand we have that $a_{N,i} \in [0,1)$ and 
$\lim_{N \to \infty} a_{N,i} =0$, which means that we can take an $N''$ such that for every $N>N''$
\begin{equation}
\sum^\infty_{k=1} \frac{a^k_{N,i} \left(||X_1||_{\text{op}}+||X_2||_{\text{op}} \right)^k}{(k+1)!}|t|^k < \epsilon'. \nonumber 
\end{equation}
This results that 
\begin{eqnarray}
  &&\ee^{ \left(||X_1||_{\text{op}}+ ||X_1-X_2 ||_{\text{op}} \right)\,|t|} \,  
\sum^N_{i=1} || \ee^{a_{N,i} X_1 t}-\ee^{a_{N,i}X_2 t}||_{\text{op}} < \nonumber \\
&&< \ee^{\left(||X_1||_{\text{op}}+\epsilon' \right)\,|t|} |t| \epsilon' (1+ \epsilon') \underbrace{\sum^N_{i=1} a_{N,i}}_{=1}. \label{b3}
\end{eqnarray}
Now, combining \eqref{b2} with \eqref{b3} we have that for all $N>\max\{N',N''\}$
\begin{eqnarray}
 &&||\ee^{a_{N,1}uX_1u^\dagger t} \ee^{a_{N,2}u^2 X_1 (u^\dagger)^2 t} \dots  \ee^{a_{N,N} u^N X_1 (u^\dagger)^N t}-\ee^{P(X_1)t}||_{\text{op}}  \nonumber \\
 && \leqslant ||\ee^{a_{N,1} uX_1u^\dagger t} \dots  \ee^{a_{N,N} u^N X_1 (u^\dagger)^N t}-\ee^{a_{N,1} u X_2 u^\dagger t} 
 \dots  \ee^{a_{N,N} u^N X_2 (u^\dagger)^N t}||_{\text{op}}+ \nonumber \\
 && + ||\ee^{a_{N,1} u X_2 u^\dagger t} \ee^{a_{N,2} u^2 X_2 (u^\dagger)^2 t} \dots  
 \ee^{a_{N,N} u^N X_2 (u^\dagger)^N t}-
 \ee^{P(X_1)t}||_{\text{op}} \nonumber \\
  && <  \ee^{\left(||X_1||_{\text{op}}+ \epsilon' \right)\,|t|} |t| \epsilon' (1+ \epsilon')+\epsilon' \leqslant \epsilon. \label{b4}
\end{eqnarray}
This shows that for all $X \in F_{{\cal B}(\mathcal{H})} \oplus \overline{\{X-uXu^\dagger: X \in {\cal B}(\mathcal{H})\}}$
\begin{eqnarray}
\lim_{N \to \infty} ||\ee^{a_{N,1}uXu^\dagger t} \ee^{a_{N,2}u^2 X (u^\dagger)^2 t} \dots  \ee^{a_{N,N}u^N X (u^\dagger)^N t}
-\ee^{P(X)t}||_{\text{op}}=0, \nonumber
\end{eqnarray}
and the proof is complete. 
\end{proof} 

This theorem yields that for certain splitting of the time interval and infinite number of pulses $N \to \infty$ the systems evolves like in the case of 
equidistant dynamical control. Thus, this results in many examples for the matrix $(a)_{i,j}$ satisfying the conditions of Theorem \ref{theorem4}, but adds
nothing to the question of optimising the convergence. Now, we shall develop a rather simple approach to this question. We consider only those elements $X$  which 
are in the set $\{X-uXu^\dagger: X \in {\cal B}(\mathcal{H})\}$ and inequality \eqref{optim} from Theorem \ref{theorem4}. In the first step we fix the value of $N$ and
assume for the sake of simplicity that $|t|=1/||Y||_{\text{op}}$ where $X=Y-u Y u^\dagger$. These simplifactions yield the following inequality
\begin{eqnarray}
&&||\ee^{a_{N,1}uXu^\dagger t} \ee^{a_{N,2}u^2 X (u^\dagger)^2 t} \dots  \ee^{a_{N,N}u^N X (u^\dagger)^N t}-1||_{\text{op}}\leqslant \nonumber \\
&& \sum^\infty_{i=2} \frac{2 \ee^{2}}{i!} \sum^{N-1}_{k=1} C^{(i,k)}+\sum^\infty_{n=1} \frac{\left(a_{N,1}+\sum^{N-1}_{i=1} |a_{N,i+1}-a_{N,i}|+a_{N,N}\right)^n}{n!}.
 \nonumber
\end{eqnarray}

In order to minimise the right hand side of the above inequality we make use of the following result.

\begin{proposition}
Assume $N$ real numbers with the following properties $a_{1}, a_{2}, \dots a_{N}\geqslant0$ and $\sum^N_{i=1} a_i =1$. Then
\begin{equation}
 \inf \left\{a_{1}+\sum^{N-1}_{i=1} |a_{i+1}-a_{i}|+a_{N} \right\}=\frac{2}{N} \nonumber
\end{equation}
if and only if $a_1=a_2=\dots=a_N=\frac{1}{N}$.
\end{proposition}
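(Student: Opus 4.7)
The plan is to rewrite the quantity as a total variation and exploit a telescoping argument. Introduce the padded sequence $a_0 := 0$ and $a_{N+1} := 0$. Then observe that
\begin{equation}
a_1 + \sum_{i=1}^{N-1} |a_{i+1} - a_i| + a_N = |a_1 - a_0| + \sum_{i=1}^{N-1}|a_{i+1} - a_i| + |a_{N+1} - a_N| = \sum_{i=0}^{N} |a_{i+1} - a_i|, \nonumber
\end{equation}
which I will call $V$ (the total variation of the padded sequence). The problem becomes: minimise $V$ subject to $a_i \geq 0$ and $\sum_{i=1}^N a_i = 1$.

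Next, for each fixed index $k \in \{1,\dots,N\}$ I would express $a_k$ in two telescoping ways, using the boundary values $a_0 = a_{N+1} = 0$:
\begin{equation}
a_k = \sum_{i=0}^{k-1}(a_{i+1} - a_i), \qquad a_k = \sum_{i=k}^{N}(a_i - a_{i+1}). \nonumber
\end{equation}
The triangle inequality applied to each representation gives $a_k \leq \sum_{i=0}^{k-1} |a_{i+1} - a_i|$ and $a_k \leq \sum_{i=k}^{N}|a_{i+1} - a_i|$. Adding these two bounds yields the pointwise estimate $2 a_k \leq V$ for every $k$.

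Summing over $k = 1, \dots, N$ and using $\sum_{k=1}^N a_k = 1$ gives $2 = 2\sum_k a_k \leq N V$, hence $V \geq 2/N$. The remaining step is the equality characterisation. If $V = 2/N$, then the chain of inequalities above must be tight for every $k$, so for each $k$ both telescoping sums must be made of terms of constant sign: the sequence must be nondecreasing on $\{0,\dots,k\}$ and nonincreasing on $\{k,\dots,N+1\}$ for \emph{every} $k \in \{1,\dots,N\}$, which forces $a_1 = a_2 = \cdots = a_N$. Combined with $\sum a_i = 1$ this gives $a_i = 1/N$. Conversely, one checks directly that $a_i = 1/N$ produces $V = 2/N$, closing the if-and-only-if statement.

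I do not expect any genuine obstacle: the whole argument is essentially the observation that the total variation of a nonnegative sequence with prescribed mass and vanishing boundary values dominates twice the maximum value, plus a constancy check. The only point to be mildly careful about is the padding at both endpoints; without it one only obtains the weaker bound $V \geq 1/N$, since $a_k \leq V$ rather than $a_k \leq V/2$.
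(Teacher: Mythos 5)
Your proof is correct, and it takes a genuinely different route from the paper's. The paper works from the elementary inequality $|x-a|+|x-b|\geqslant|b-a|$ and a case analysis on whether the sequence ends by decreasing or increasing: it telescopes the sum down to one endpoint, argues that the minimum forces the sequence to be monotone, and then uses $N a_1 \geqslant \sum_i a_i = 1$ (or the analogous bound at $a_N$) to reach $2/N$. The "the minimum is realised when\dots" steps there are somewhat informal. You instead pad the sequence with zeros at both ends, obtain the pointwise estimate $2a_k \leqslant V$ for \emph{every} $k$ by telescoping from both boundaries, and then average over $k$ to get $V \geqslant 2/N$ with no case analysis at all; the averaging step does the work that the paper's monotonicity claim does. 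Your approach also yields the equality case more directly than you exploit: once $V = 2/N$, the summation forces $2a_k = V$ for every $k$, i.e.\ $a_k = 1/N$ immediately, so your detour through the sign conditions on the telescoping terms (nondecreasing up to $k$, nonincreasing after $k$, for all $k$, hence constant) is valid but unnecessary. Both proofs rest on the same underlying observation that the padded total variation dominates twice each $a_k$; yours applies it uniformly and averages, the paper's applies it only at an extremal index and must argue separately that this index carries mass at least $1/N$. Your version is the cleaner and more rigorous of the two.
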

\begin{proof}
The proof in one direction is trivial. For the other direction, suppose first that $a_{N}\leqslant a_{N-1}$. We have the following inequality
\begin{equation}
 |x-a|+|x-b|\geqslant |b-a|,\,\, \text{for} \,\, a,b,x \in \mathbb{R}
 \label{inequse}
\end{equation}
and thus
\begin{equation}
 a_N+|a_N-a_{N-1}| \geqslant a_{N-1} \nonumber
\end{equation}
and we obtain its minimum when $a_N \in [0, a_{N-1}]$. Therefore a repeated application of \eqref{inequse} yields
\begin{equation}
a_{1}+\sum^{N-1}_{i=1} |a_{i+1}-a_{i}|+a_{N} \geqslant 2 a_1 \nonumber
\end{equation}
and the minimum is realised when 
\begin{equation}
a_N \leqslant a_{N-1} \leqslant \dots \leqslant a_1. \nonumber 
\end{equation}
Hence $N a_1 \geqslant \sum^N_{i=1} a_i =1$ and $a_1 \geqslant \frac{1}{N}$.  Consequently, 
\begin{equation}
a_{1}+\sum^{N-1}_{i=1} |a_{i+1}-a_{i}|+a_{N} \geqslant \frac{2}{N} \nonumber
\end{equation}
and the minimum is obtained when $a_1=a_2= \dots = a_N =\frac{1}{N}$. 

If $a_{N-1}\leqslant a_{N}$ then we have
\begin{equation}
a_{1}+\sum^{N-1}_{i=1} |a_{i+1}-a_{i}|+a_{N} \geqslant 2 a_N \nonumber
\end{equation}
and the minimum is realised when 
\begin{equation}
a_1 \leqslant a_{2} \leqslant \dots \leqslant a_N. \nonumber 
\end{equation}
Now, we have $ N a_N \geqslant \sum^N_{i=1} a_i =1$ and $a_N \geqslant \frac{1}{N}$.

Putting together both cases, we obtain the desired conclusion.
\end{proof}

It is worthwhile to note that the minimum of 
\begin{equation}
 a_{j}+\sum^{n-1}_{i=j} |a_{i+1}-a_{i}|+a_{n},\,\,\, j<n \nonumber
\end{equation}
is attained for $a_j=a_{j+1}= \dots =a_n$ which is a direct consequence of formula \eqref{inequse}. Therefore,
\begin{equation}
 \sum^N_{k=1} C^{(i,k)}+\left( a_{1}+\sum^{N-1}_{i=1} |a_{i+1}-a_{i}|+a_{N} \right)^i, \, \forall i \nonumber 
\end{equation}
is minimal when $a_1=a_2= \dots = a_N =\frac{1}{N}$. These results show that the inequality \eqref{optim} together with the definitions \eqref{Cels} and 
\eqref{Calt} derived for the proof of Theorem \ref{theorem4} takes its minimum when the weights of the time splitting are equal. 
This means that in the context of our derivation the case of equidistant dynamical control
is more favourable. 

\section{Conclusions}

In the context of dynamical control we have shown that the limit of infinite pulses for equidistant and non-equidistant single pulse "bang-bang" evolution
converges to the same limit. We have demonstrated that not only for unitary evolution but also for any uniformly continuous semigroup on a Hilbert space the generator
of the limit evolution is obtained by projecting the generator of the free evolution onto the commutant of unitary pulse. All the results are formulated for a linear
closed subspace of the Banach space of all bounded linear operators on a Hilbert space. This linear closed subspace is defined by the elements for which the  
$\lim_{N \to \infty} N^{-1} \sum^N_{i=1} T^i (X)$ exists for $T(X)=uXu^\dagger$ and $u$ being the unitary operator representing the pulse. The restriction is 
related to the fact that the map $X \rightarrow uXu^\dagger$ with $X$ being a bounded linear operator is not weakly compact. This issue falls into the more general problem of mean
ergodicity, whereas a power bounded map is called mean ergodic if the linear closed subspace defined by its Ces\`{a}ro mean is equal to the whole Banach space. If a Banach space is 
reflexive, then any power bounded map is mean ergodic on this Banach space \cite{Lorch}. For example, the map $T(X)=uXu^\dagger$ on the space of Hilbert-Schmidt operators is mean 
ergodic. In the case of finite dimensional Hilbert spaces all these complications and subtleties do not arise and the results obtained in this paper can be formulated for the 
whole space of square matrices.

We have derived inequalities for the proofs of the main theorems. In the case of non-equidistant dynamical control we have investigated 
the optimisation of these inequalities
for a fixed number of pulses. These inequalities are functions of the weights of the time splitting and we have found that the minimum is reached 
for the case of
equal weights. This means that in the context of our derivation the equidistant dynamical control is more effective than any 
other non-equidistant approach. 
However, in the case of specific systems it may happen that sharper inequalities can be derived, thus leading to a 
different result in the optimisation procedure.

This work is considered as a first step of proving convergence for a certain class of semigroup products defined by the problem of dynamical control. 
Thus this approach is open to further generalisations, for example considering contraction semigroups with unbounded generators.  

\ack
This work is supported by the Bundesministerium f\"ur Bildung und Forschung project Q.com. The author has profited from helpful discussions with A. B. Frigyik and F. Sokoli. In addition,
the author would like to thank the anonymous reviewer for her/his helpful and constructive comments that greatly contributed to 
improving the paper.

\section*{References}


\begin{thebibliography}{99}
%
\bibitem{DiVincenzo} DiVincenzo D\, P 2000 Fortschr. Phys. {\bf 48}, 771
%
\bibitem{Viola1} Viola L and Lloyd S 1998 Phys. Rev. A {\bf 58}, 2733 
%
\bibitem{Viola2} Viola L, Knill E, and Lloyd S 1999 Phys. Rev. Lett. {\bf 82}, 2417 
%
\bibitem{Zanardi} Zanardi P 1999 Phys. Lett. A 258, 77
%
\bibitem{Vitali} Vitali D and Tombesi P 1999 Phys. Rev. A {\bf 59}, 4178
%
\bibitem{hahn} Hahn  E L 1950 Phys. Rev. {\bf 80}, 580-594
%
\bibitem{Carr} Carr H Y and Purcell E M 1954 Phys. Rev. {\bf 94}, 630 
%
\bibitem{Meiboom} Meiboom S and Gill D 1958 Rev. Sci. Instrum. {\bf 29}, 688
%
\bibitem{Haeberlen} Haeberlen U 1976 {\it High Resolution NMR in Solids} Advances in Magnetic Resonance Series (New York: Academic)
%
\bibitem{Viola3} Viola L and Knill E 2005 Phys. Rev. Lett. {\bf 94}, 060502
%
\bibitem{Uhrig} Uhrig G S 2007 Phys. Rev. Lett. {\bf 98}, 100504
%
\bibitem{Khodjatesh} Khodjasteh K and Lidar D A 2007 Phys. Rev. A {\bf 75}, 062310; 2009 {\bf 79}, 069901
%
\bibitem{Quiroz} Quiroz G and Lidar D A 2013 Phys. Rev. A {\bf 88}, 052306
%
\bibitem{Green} Green T J and Biercuk M J 2015 Phys. Rev. Lett. {\bf 114}, 120502
%
\bibitem{Biercuk} Biercuk M J, Uys H, VanDevender A P, Shiga N, Itano W M, and Bollinger J J 2009 Phys. Rev. A {\bf 79}, 062324 
%
\bibitem{Bluhm} Bluhm  H, Foletti S, Neder I, Rudner M, Mahalu D, Umansky V, and Yacoby A 2011 Nature Phys. {\bf 7}, 109  
%
\bibitem{Souza} Souza A M, \'Alvarez G A, and Suter D 2011 Phys. Rev. Lett. {\bf 106}, 240501  
%
\bibitem{Hayes} Hayes D, Clark S M, Debnath S, Hucul D, Inlek I V, Lee K W, Quraishi Q, and Monroe C 2012 Phys. Rev. Lett. {\bf 109}, 020503
%
\bibitem{Piltz} Piltz Ch, Scharfenberger B, Khromova A, Var\'on A F, and Wunderlich Ch 2013 Phys. Rev. Lett. {\bf 110}, 200501
%
\bibitem{Escher} Escher B M, Bensky G, Clausen J and Kurizki G 2011 J. Phys. B {\bf 44}, 154015
%
\bibitem{Frydrych} Frydrych H, Hoskovec A, Alber G, Jex I 2015 J. Phys. B {\bf 48}, 025501
%
\bibitem{Arendt} Arendt W and Ulm M 1997 Trotter's product formula for projections {\it Ulmer Seminare}, 394
%
\bibitem{Facchi} Facchi P, Lidar D A, Pascazio S 2004 Phys. Rev. A {\bf 69}, 032314
%
\bibitem{Chernoff} Chernoff P 1968 J. Funct. Anal. {\bf 2}, 238
%
\bibitem{Hillier} Arenz Ch, Hillier R, Fraas M, and Burgarth D 2015 Phys. Rev. A {\bf 92}, 022102
%
\bibitem{Krengel} Krengel U 1985 {\it Ergodic Theorems} (Berlin: de Gruyter).
%
\bibitem{Eberlein} Eberlein W F 1949 Trans. Amer. Math. Soc. {\bf 67}, 217
%
\bibitem{Bourgin} Bourgin D G 1943 Amer. J. Math. {\bf 65} 637
%
\bibitem{Yosida} Yosida K 1938 Proc. Imp. Acad. Tokyo {\bf 14}, 292
%
\bibitem{Cohen} Cohen L W 1940 Ann. Math. {\bf 41}, 505
%
\bibitem{Reed} Reed M and Simon B 1980 {\it Methods of Modern Mathematical Physics I: Functional Analysis} (San Diego: Academic).
%
\bibitem{Murphy} Murphy G J 1990 {\it C$^*$-Algebras and Operator Theory} (San Diego: Academic).
%
\bibitem{Jardas} Jardas C and Sarapa N 1997 Math. Commun. {\bf 2}, 107
%
\bibitem{Lorch} Lorch E R 1939 Bull. Amer. Math. Soc. {\bf 45}, 945
\end{thebibliography}
\end{document}